\documentclass{llncs}
\usepackage{amsmath}
\usepackage{amssymb}
\usepackage{algorithmicx}
\usepackage{algpseudocode}
\usepackage{graphicx}
\usepackage{gastex}

\spnewtheorem*{maintheorem}{Theorem}{\bf}{\itshape}
\spnewtheorem*{maincorollary}{Corollary}{\bf}{\itshape}

\pagestyle{headings}

\begin{document}
\sloppy 

\newcommand{\lvec}[1]{\overset{{}_{\leftarrow}}{#1}}
\newcommand{\lrange}[1]{\overleftarrow{#1}}
\newcommand{\rrange}[1]{\overrightarrow{#1}}
\newcommand{\R}{\mathbb{R}}
\newcommand{\Q}{\mathbb{Q}}
\newcommand{\N}{\mathbb{N}}
\newcommand*{\orgets}{\overset{\mathrm{or}}{\gets}}
\newcommand*{\bor}{\mathrel{\mathrm{or}}}
\newcommand*{\band}{\mathop{\mathrm{and}}}
\newcommand*{\shl}{\mathrel{\mathrm{shl}}}
\def\append{{\sf append}}
\def\link{{\sf link}}
\def\unlink{{\sf unlink}}
\def\maxPal{{\sf maxPal}}
\def\nextPal{{\sf nextPal}}
\def\rad{{\sf rad}}
\def\refl{{\sf refl}}
\def\len{{\sf len}}

\algtext*{EndFunction}
\algtext*{EndProcedure}
\algtext*{EndFor}
\algtext*{EndWhile}
\algtext*{EndIf}
\algloopdefx{NIf}[1]{\textbf{if} #1 \textbf{then}}
\algloopdefx{NElse}[1]{\textbf{else} #1}
\algloopdefx{NElseIf}{\textbf{else if}}
\algloopdefx{NForAll}[1]{\textbf{for each} #1 \textbf{do}}
\algloopdefx{NWhile}[1]{\textbf{while} #1 \textbf{do}}
\algloopdefx{NFor}[1]{\textbf{for} #1 \textbf{do}}

\title{$\mathrm{Pal}^k$ Is Linear Recognizable Online}
\author{Dmitry Kosolobov \and Mikhail Rubinchik \and Arseny M. Shur}

\institute{Ural Federal University, Ekaterinburg, Russia\\ \email{dkosolobov@mail.ru}, \email{mikhail.rubinchik@gmail.com}, \email{arseny.shur@usu.ru}}
\maketitle

\begin{abstract}
Given a language $L$ that is online recognizable in linear time and space, we construct a linear time and space online recognition algorithm for the language $L\cdot\mathrm{Pal}$, where $\mathrm{Pal}$ is the language of all nonempty palindromes. Hence for every fixed positive~$k$, $\mathrm{Pal}^k$ is online recognizable in linear time and space. Thus we solve an open problem posed by Galil and Seiferas in~1978.

\noindent\textbf{Keywords:} linear algorithms, online algorithms, palindrome, factorization, palstar, palindrome decomposition
\end{abstract}

\section{Introduction}

In the last decades the study of palindromes constituted a notable branch in formal language theory. Recall that a string $w = a_1\cdots a_n$ is a \emph{palindrome} if it is equal to $\lvec{w} = a_n\cdots a_1$. There is a bunch of papers on palindromes in strings. Some of these papers contain the study of strings ``rich'' in palindromes (see, e.g.,~\cite{GJWZ}), some other present solutions to algorithmic problems like finding the longest prefix-palindrome \cite{Manacher} or counting distinct subpalindromes~\cite{KosolobovRubinchikShur}.

For languages constructed by means of palindromes, an efficient recognition algorithm is often not straightforward. In this paper we develop a useful tool for construction of acceptors for such languages. Before stating our results, we recall some notation and known facts.

The language of nonempty palindromes over a fixed alphabet is denoted by $\mathrm{Pal}$. Let $\mathrm{Pal_{ev}} = \{w\in \mathrm{Pal} \colon |w|\text{ is even}\}$, $\mathrm{Pal_{>1}} = \{w\in \mathrm{Pal} \colon |w|>1\}$. Given a function $f \colon \mathbb{N} \to \mathbb{N}$ and a language $L$, we say that an algorithm \emph{recognizes} $L$ \emph{in $f(n)$ time and space} if for any string $w$ of length $n$, the algorithm decides whether $w\in L$ using at most $f(n)$ time and at most $f(n)$ additional space. We say that an algorithm recognizes a given language \emph{online} if the algorithm processes the input string sequentially from left to right and decides whether to accept each prefix after reading the rightmost letter of that prefix.

It is well known that every context-free language can be recognized by relatively slow Valiant's algorithm (see~\cite{Valiant}). According to~\cite{Lee}, there are still no  examples of context-free languages that cannot be recognized in linear time on a RAM computer. Some ``palindromic'' languages were considered as candidates to such ``hard'' context-free languages.

At some point, it was conjectured that the languages $\mathrm{Pal_{ev}}^{*}$ and $\mathrm{Pal_{>1}}^{*}$, where $^*$ is a Kleene star, cannot be recognized in $O(n)$ (see~\cite[Section~6]{KnuthMorrisPratt}). But a linear algorithm for the former was given in~\cite{KnuthMorrisPratt} and for the latter in~\cite{GalilSeiferas}. The recognition of $\mathrm{Pal}^k$ appeared to be a more complicated problem. Linear algorithms for the cases $k = 1,2,3,4$ were given in~\cite{GalilSeiferas}. Their modified versions can be found in~\cite[Section~8]{Stringology}. In~\cite{GalilSeiferas} and~\cite{Stringology} it was conjectured that there exists a linear time recognition algorithm for $\mathrm{Pal}^k$ for arbitrary $k$. In this paper we present such an algorithm. Moreover, our algorithm is online. The main contribution is the following result.

\begin{maintheorem}
Suppose a given language $L$ is online recognizable in $f(n)$ time and space, for some function $f:\mathbb{N}\to\mathbb{N}$. Then the language $L\cdot \mathrm{Pal}$ can be recognized online in $f(n) + cn$ time and space for some constant $c > 0$ independent of $L$. \label{MainTheorem}
\end{maintheorem}

\begin{maincorollary}
For arbitrary $k$, $\mathrm{Pal}^k$ is online recognizable in $O(kn)$ time and space.
\end{maincorollary}

Note that the related problem of finding the minimal $k$ such that a given string belongs to $\mathrm{Pal}^k$ can be solved online in $O(n \log n)$ time \cite{FiciCo}, and it is not known whether a linear algorithm exists.

The paper is organized as follows. Section~\ref{palindromicPairs} contains necessary combinatorial properties of palindromes; similar properties were considered, e.g., in \cite{BreslauerGasieniec}. In Sect.~\ref{palindromicIterator} we describe an auxiliary data structure used in the main algorithm. An online recognition algorithm for $\mathrm{Pal}^k$ with $O(kn\log n)$ working time is given in Sect.~\ref{onlineAlgorithmForRecognition}. Finally, in Sect.~\ref{linearAlgorithm} we speed up this algorithm to obtain the main result.

\section{Basic Properties of Palindromes}\label{palindromicPairs}

A \emph{string of length $n$} over the alphabet $\Sigma$ is a map $\{1,2,\ldots,n\} \mapsto \Sigma$. The length of $w$ is denoted by $|w|$ and the empty string by $\varepsilon$. We write $w[i]$ for the $i$th letter of $w$ and $w[i..j]$ for $w[i]w[i{+}1]\ldots w[j]$. Let $w[i..i{-}1] = \varepsilon$ for any~$i$. A string $u$ is a \emph{substring} of $w$ if $u=w[i..j]$ for some $i$ and $j$. The pair $(i,j)$ is not necessarily unique; we say that $i$ specifies an \emph{occurrence} of $u$ in $w$. A string can have many occurrences in another string. A substring $w[1..j]$ (resp., $w[i..n]$) is a \emph{prefix} [resp. \emph{suffix}] of $w$. An integer $p$ is a \emph{period} of $w$ if $w[i] = w[i{+}p]$ for $i=1,\ldots,|w|{-}p$.

\begin{lemma}[{see~\cite[Chapter~8]{Lothaire}}]
Suppose $v$ is both a prefix and a suffix of a string $w$; then the number $|w|{-}|v|$ is a period of $w$.
\label{BorderImpliesPeriod}
\end{lemma}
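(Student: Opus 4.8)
The plan is to unwind the three definitions involved — prefix, suffix, and period — and observe that they line up index-for-index, so that the conclusion follows by a single substitution with no combinatorial content beyond bookkeeping. Set $n = |w|$ and $p = |w|-|v|$, so that $|v| = n-p$. The goal is to verify the period condition from the definition given above, namely that $w[i] = w[i+p]$ for every $i = 1,\ldots,n-p$.

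First I would translate the hypothesis that $v$ is a prefix of $w$. By definition this means $v = w[1..\,|v|] = w[1..\,n-p]$, and hence $v[i] = w[i]$ for each $i$ with $1 \le i \le n-p$. Next I would translate the hypothesis that $v$ is a suffix of $w$: this means $v = w[\,|w|-|v|+1\,..\,n] = w[p+1..\,n]$, and reading off letters position by position gives $v[i] = w[p+i]$ for each $i$ with $1 \le i \le n-p$ (using $n = p + (n-p)$ so that the indices $p+i$ sweep exactly the range $p+1,\ldots,n$).

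Finally I would chain these two equalities through $v$. For every $i$ in the common range $1 \le i \le n-p$ we get $w[i] = v[i] = w[i+p]$, which is precisely the required identity $w[i] = w[i+p]$ for $i = 1,\ldots,|w|-p$. By the definition of period, $p = |w|-|v|$ is therefore a period of $w$.

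There is no genuine obstacle here; the only thing demanding care is the indexing, in particular confirming that the two ranges $1 \le i \le n-p$ arising from the prefix and suffix conditions coincide, so that the substitution is legitimate for every relevant $i$. The boundary cases are harmless and worth a one-line check: if $v = \varepsilon$ then $p = n$ and the range $1,\ldots,n-p$ is empty, so the period condition holds vacuously; and if $v = w$ then $p = 0$ and the condition $w[i] = w[i]$ holds trivially. Thus the statement follows directly from the definitions.
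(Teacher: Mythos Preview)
Your proof is correct. The paper does not actually supply a proof of this lemma; it is stated with a citation to Lothaire and left unproved, so there is nothing to compare against beyond noting that your direct index-by-index verification is the standard argument one would expect from that reference.
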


A substring [resp. suffix, prefix] of a given string is called a \emph{subpalindrome} [resp. \emph{suffix-palindrome}, \emph{prefix-palindrome}] if it is a palindrome. We write $w = (uv)^*u$ to state that $w = (uv)^ku$ for some nonnegative integer $k$. In particular, $u = (uv)^*u$, $uvu = (uv)^*u$.

\begin{lemma}
Suppose $p$ is a period of a nonempty palindrome $w$; then there are palindromes $u$ and $v$ such that $|uv|=p$, $v\neq\varepsilon$, and $w=(uv)^*u$.\label{PalPeriodicity}
\end{lemma}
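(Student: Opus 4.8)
The plan is to read off the required lengths of $u$ and $v$ from the target identity and then verify the two palindromicity conditions separately. Write $n=|w|$. If $w=(uv)^k u$ with $|uv|=p$ and $v\neq\varepsilon$, then $n=kp+|u|$ with $0\le |u|<p$, so necessarily $|u|=n\bmod p$ and $k=\lfloor n/p\rfloor$. I therefore set $r=n\bmod p$ and, in the main case $p\le n$ (so that $k\ge 1$), define $u=w[1..r]$ and $v=w[r{+}1..p]$, splitting the length-$p$ prefix $z:=w[1..p]=uv$ at position $r$. Since $p$ is a period, $w$ is a prefix of the infinite power of $z$, whence $w=z^k\,z[1..r]=(uv)^k u$; this yields the factorization for free and reduces the lemma to proving that $u$ and $v$ are palindromes. (The degenerate case $p>n$ is immediate: take $u=w$, which is already a palindrome, and let $v$ be any palindrome of length $p-n>0$, e.g.\ a power of a single letter; then $w=(uv)^0u$ and $v\neq\varepsilon$.)

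First I would show that $u$ is a palindrome, by playing the reflection symmetry of $w$ against its periodicity. Because $w=\lvec{w}$, the reversal $\lvec{u}$ of the length-$r$ prefix equals the length-$r$ suffix $w[n{-}r{+}1..n]$. Since $n-r=kp$, applying the period $k$ times gives $w[n{-}r{+}j]=w[j]$ for $1\le j\le r$, i.e.\ $w[n{-}r{+}1..n]=w[1..r]=u$. Hence $\lvec{u}=u$.

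The main obstacle is the palindromicity of $v$, which I expect to extract from that of $u$. Reversing $z=uv$ and using $w=\lvec w$ gives $\lvec{z}=w[n{-}p{+}1..n]$, the length-$p$ suffix of $w$. As $n-p=(k{-}1)p+r$, periodicity shows this suffix equals $w[r{+}1..r{+}p]$; its first $p-r$ letters form $v$, while its last $r$ letters, since $w[p{+}j]=w[j]$ for $1\le j\le r$, form $u$, so $\lvec z=vu$. On the other hand $\lvec z=\lvec v\,\lvec u=\lvec v\,u$, using that $u$ is already known to be a palindrome. Comparing the two expressions and cancelling the common suffix $u$ yields $\lvec v=v$, as desired. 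The only delicate point throughout is the index bookkeeping for the repeated period shifts, all of which stay in range precisely because $k\ge1$ and $r<p$; assembling the palindromes $u$, $v$ together with the factorization then completes the argument.
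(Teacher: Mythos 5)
Your proof is correct and follows essentially the same route as the paper: split the length-$p$ prefix into $u$ of length $|w|\bmod p$ and $v$ the remainder so that $w=(uv)^*u$, then play $w=\lvec{w}$ against this factorization to get $u=\lvec{u}$ and $v=\lvec{v}$. The paper does this in one line by reversing the whole factorization and matching blocks, whereas you verify the same identities index by index (and also cover the vacuous case $p>|w|$, which the paper silently ignores); no gap either way.
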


\begin{proof}
Let $uv$ be a prefix of $w$ of length $p$ such that $v\ne\varepsilon$ and $w=(uv)^*u$. Since $w=\lvec{w} = (\lvec{u}\lvec{v})^*\lvec{u}$, we see that $u=\lvec{u}$ and $v=\lvec{v}$.
\end{proof}

\begin{lemma}
Suppose $w$ is a palindrome and $u$ is its proper suffix-palindrome or prefix-palindrome; then the number $|w|{-}|u|$ is a period of $w$.\label{PalSuffix}
\end{lemma}

\begin{proof}
Let $w = vu$ for some $v$. Hence $vu = w = \lvec{w} = \lvec{u}\lvec{v} = u\lvec{v}$. It follows from Lemma~\ref{BorderImpliesPeriod} that $|v|$ is the period of $w$. The case of a prefix-palindrome is similar.
\end{proof}

\begin{lemma}
Let $u, v$ be palindromes such that $v \neq \varepsilon$ and $uv = z^k$ for some string $z$ and integer $k$; then there exist palindromes $x$ and $y$ such that $z = xy$, $y\neq\varepsilon$, $u = (xy)^*x$, and $v = (yx)^*y$.\label{FullPeriodPalPair}
\end{lemma}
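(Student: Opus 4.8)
The plan is to show that the point at which $z$ splits into the two palindromes $x,y$ is forced by the position where $u$ ends inside the periodic word $z^k$, that the two displayed factorizations then hold automatically, and that only the palindromicity of $x$ and $y$ actually uses the hypotheses on $u$ and $v$.

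First I would dispose of the degenerate cases: since $v\neq\varepsilon$ we have $uv\neq\varepsilon$, hence $z\neq\varepsilon$ and $k\ge 1$; set $p=|z|\ge 1$, so that $w=uv=z^k$ has period $p$. Writing $|u|=jp+s$ with $0\le s<p$, I would define $x=z[1..s]$ and $y=z[s{+}1..p]$, so that $z=xy$ and, crucially, $y\neq\varepsilon$ because $s<p$. With this choice the prefix of $w$ of length $|u|$ consists of $j$ full copies of $z$ followed by the prefix $x$ of the next copy, so the period-$p$ structure of $z^k$ gives $u=w[1..|u|]=(xy)^j x=(xy)^*x$ for free; the complementary suffix is then $v=w[|u|{+}1..kp]=y\,z^{\,k-j-1}=(yx)^{k-j-1}y=(yx)^*y$, where $k-j-1\ge 0$ since $|v|\ge 1$ forces $j\le k-1$. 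Thus both factorizations hold by construction, independently of any palindrome assumption.

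It remains to prove that $x$ and $y$ are palindromes, and this is where the hypotheses enter through a single observation: a palindrome is determined on its two ends. The word $u=(xy)^*x$ both begins and ends with the block $x$, so its first $|x|$ letters equal $x$ and its last $|x|$ letters also equal $x$; since $u$ is a palindrome, its prefix of length $|x|$ is the reversal of its suffix of length $|x|$, whence $x=\lvec{x}$. Applying the same reasoning to $v=(yx)^*y$, which begins and ends with the block $y$, yields $y=\lvec{y}$. The boundary cases $s=0$ (i.e. $x=\varepsilon$) and $j=0$ or $k-j-1=0$ are covered automatically, the empty string being a palindrome.

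I expect the only real subtlety to be the bookkeeping of the decomposition: checking that the forced split $s=|u|\bmod p$ genuinely realizes $u$ and $v$ as the claimed prefix and suffix of $z^k$, and verifying the ranges of the exponents. The palindromicity step is then immediate, so I do not anticipate needing Lemma~\ref{PalPeriodicity} or any conjugacy/primitivity argument; the whole statement should reduce to elementary period arithmetic together with the end-determination property of palindromes.
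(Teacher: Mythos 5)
Your proof is correct. It lands on exactly the same decomposition as the paper --- $z$ is cut at position $|u| \bmod |z|$, so your $x$ and $y$ are the very strings the paper produces --- but the route is genuinely more self-contained. The paper disposes of $k=1$ separately, then for $k>1$ splits into the cases $|z|\le|u|$ and $|z|\le|v|$ (at least one of which must hold) and invokes Lemma~\ref{PalPeriodicity} on whichever of $u$, $v$ is long enough, recovering the factorization of the other factor from $z^k=(xy)^k$. You instead verify both factorizations $u=(xy)^jx$ and $v=(yx)^{k-j-1}y$ directly from the period arithmetic, with no case split and with $k=1$ absorbed into the general argument, and you establish $x=\lvec{x}$ and $y=\lvec{y}$ by comparing the length-$|x|$ (resp.\ length-$|y|$) prefix and suffix of the palindrome $u$ (resp.\ $v$), rather than by reversing an entire block factorization as is done inside the proof of Lemma~\ref{PalPeriodicity}. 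All the details check out: $y\neq\varepsilon$ because $s<p$, the exponent bound $j\le k-1$ follows from $|v|\ge 1$, and the degenerate cases $x=\varepsilon$, $j=0$, $k-j-1=0$ are indeed harmless. What the paper's version buys is brevity by reusing an already-proved lemma; what yours buys is independence from that lemma and a uniform, case-free treatment.
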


\begin{proof}
The case $k=1$ is trivial. Suppose $k>1$. Consider the case $|z| \le |u|$. It follows from Lemma~\ref{PalPeriodicity} that there exist palindromes $x,y$ such that $z = xy$, $y\neq\varepsilon$, $u = (xy)^*x$. Since $z^k = (xy)^k = uv$, we have $v = (yx)^*y$. The case $|z|\le|v|$ is similar.
\end{proof}

A string is \emph{primitive} if it is not a power of a shorter string. A string is called a \emph{palindromic pair} if it is equal to a concatenation of two palindromes.

\begin{lemma}
A palindromic pair $w$ is primitive iff there exists a unique pair of palindromes $u, v$ such that $v\neq \varepsilon$ and $w = uv$. \label{PrimitivePalPair}
\end{lemma}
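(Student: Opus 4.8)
The statement is an equivalence, so the plan is to prove both implications by contraposition. First note that a decomposition $w=uv$ into palindromes with $v\ne\varepsilon$ is completely determined by the single number $|u|$, since then $u=w[1..|u|]$ and $v=w[|u|{+}1..|w|]$; hence two decompositions are distinct exactly when their prefix lengths differ, and ``uniqueness fails'' means there are palindromes with $w=u_1v_1=u_2v_2$, $v_1,v_2\ne\varepsilon$, and (say) $|u_1|<|u_2|$. Existence of at least one such decomposition is free: $w$ is a palindromic pair, and if $w$ is itself a palindrome we may take $u=\varepsilon$, $v=w$.

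For ``primitive $\Rightarrow$ unique'' I would prove the contrapositive: two distinct decompositions force $w$ to be a nontrivial power. The key observation is that reversing a palindromic decomposition merely rotates $w$: since $u_i,v_i$ are palindromes, $\lvec{w}=\lvec{v_i}\,\lvec{u_i}=v_iu_i$, so $\lvec{w}$ is the cyclic rotation of $w=u_iv_i$ that moves the length-$|u_i|$ prefix to the end. Writing $R_t(w)=w[t{+}1..|w|]\,w[1..t]$ for this rotation, both decompositions give $R_{|u_1|}(w)=\lvec{w}=R_{|u_2|}(w)$, whence $R_p(w)=w$ for $p:=|u_2|-|u_1|$, with $0<p<|w|$. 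A word fixed by $R_p$ is fixed by $R_d$ for $d:=\gcd(p,|w|)$, and since $d\mid|w|$ and $d<|w|$ this forces $w=(w[1..d])^{|w|/d}$ with exponent at least $2$, so $w$ is not primitive. One can also reach $R_p(w)=w$ through the period lemmas: $u_1$ is a proper prefix-palindrome of $u_2$ and $v_2$ a proper suffix-palindrome of $v_1$, so Lemma~\ref{PalSuffix} makes $p$ a period of both $u_2$ and $v_1$; a length count gives $|u_2|+|v_1|=|w|+p$, so these two factors together cover $w$ and overlap in a window of length $p$, making $p$ a period of $w$, and the reversal identity then supplies the divisibility.

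For ``unique $\Rightarrow$ primitive'' I would again argue contrapositively and exhibit two decompositions when $w$ is not primitive. Write $w=z^k$ with $k\ge 2$, and take any decomposition $w=uv$ into palindromes with $v\ne\varepsilon$; Lemma~\ref{FullPeriodPalPair} yields palindromes $x,y$ with $y\ne\varepsilon$, $z=xy$, $u=(xy)^*x$, $v=(yx)^*y$. Using the identity $x(yx)^m=(xy)^mx$ (and its $y$-analogue, which also show the factors below are palindromes) I then read off the whole family
\[
w=(xy)^k=(xy)^i x\cdot (yx)^{k-1-i}y\qquad(0\le i\le k-1),
\]
each a concatenation of two palindromes whose second factor is nonempty. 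Distinct $i$ give distinct prefix lengths $ip+|x|$ with $p=|z|$, so for $k\ge 2$ there are at least two decompositions and uniqueness fails.

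I expect the forward implication to be the main obstacle. The tempting route stops once $p=|u_2|-|u_1|$ is shown to be a period of $w$, but an ordinary period $p<|w|$ is too weak to break primitivity (e.g.\ $aba$ has period $2$ yet is primitive and uniquely decomposable); the real content is upgrading this into a \emph{cyclic} period dividing $|w|$, which is exactly what the reversal-is-a-rotation identity $\lvec{w}=R_{|u|}(w)$ delivers. The backward implication is essentially bookkeeping once Lemma~\ref{FullPeriodPalPair} is available; the only care needed is with degenerate blocks, namely allowing $x=\varepsilon$ and the case where $w$ equals a single palindrome, neither of which affects the counts above.
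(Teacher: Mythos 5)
Your proof is correct, and its two halves relate differently to the paper's. The direction ``not primitive $\Rightarrow$ not unique'' is essentially the paper's: both you and the authors apply Lemma~\ref{FullPeriodPalPair} to a power $w=z^k$ and read off the family of decompositions $(xy)^ix\cdot(yx)^{k-1-i}y$, of which the paper lists the cases $i=0,1$. The genuine divergence is in ``not unique $\Rightarrow$ not primitive''. The paper proves this by induction on $|w|$: it extracts periods of $u_2$ and $v_1$ via Lemmas~\ref{PalPeriodicity} and~\ref{PalSuffix}, passes to the shorter palindromic pair $yx=y'x'$, and matches primitive roots through Lemma~\ref{FullPeriodPalPair} and the induction hypothesis. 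You instead observe that any palindromic decomposition $w=uv$ gives $\lvec{w}=vu=\shl(w,|u|)$, so two distinct decompositions force $\shl(w,|u_2|{-}|u_1|)=w$ with $0<|u_2|{-}|u_1|<|w|$, and non-primitivity follows from the rotation criterion (Lemma~\ref{PrimitiveCriterion} in the paper, which you reprove via the $\gcd$ subgroup argument). This is shorter, avoids the induction entirely, and is the classical fact that a word is a product of two palindromes iff it is conjugate to its reversal; since Lemma~\ref{PrimitiveCriterion} is an independent citation from Lothaire rather than a consequence of the present lemma, there is no circularity in invoking it even though the paper happens to state it afterwards. Your remark that a bare period $p=|u_2|{-}|u_1|$ would not suffice (witness $aba$) correctly identifies the rotation identity, not Lemma~\ref{PalSuffix}, as the load-bearing step. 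The paper's longer induction buys nothing extra for the statement as given, beyond exhibiting the primitive root explicitly as the palindromic pair $\tilde{x}\tilde{y}$.
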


\begin{proof}
Let $w$ be a non-primitive palindromic pair. Suppose $w = z^k = uv$, where $z$ is a string, $k>1$, and $u, v$ are palindromes. By Lemma~\ref{FullPeriodPalPair}, we obtain palindromes $x,y$ such that $z = xy$ and $y\neq \varepsilon$. Now $w = u_1v_1 = u_2v_2$, where $u_1 = x, v_1 = y(xy)^{k-1}, u_2 = xyx, v_2 = y(xy)^{k-2}$ are palindromes.

For the converse, consider $w = u_1v_1 = u_2v_2$, where $u_1, u_2, v_1, v_2$ are palindromes and $|u_1| < |u_2| < |w|$. We claim that $w$ is not primitive. The proof is by induction on the length of $w$. For $|w| \le 2$, there is nothing to prove. Suppose $|w| > 2$. It follows from Lemmas~\ref{PalPeriodicity} and~\ref{PalSuffix} that there exist palindromes $x,y$ such that $u_2 = u_1yx = (xy)^*x$. In the same way we obtain palindromes $x',y'$ such that $v_1 = y'x'v_2 = (y'x')^*y'$. Hence $yx = y'x'$. Let $z$ be a primitive string such that $yx = z^k$ for some $k>0$. By Lemma~\ref{FullPeriodPalPair}, we obtain palindromes $\tilde{x}, \tilde{y}$ such that $x = (\tilde{x}\tilde{y})^*\tilde{x}$, $y = (\tilde{y}\tilde{x})^*\tilde{y}$, and $z = \tilde{y}\tilde{x}$. Similarly, we have palindromes $\tilde{x}', \tilde{y}'$ such that $x' = (\tilde{x}'\tilde{y}')^*\tilde{x}'$, $y' = (\tilde{y}'\tilde{x}')^*\tilde{y}'$, and $z = \tilde{y}'\tilde{x}'$. By induction hypothesis, $\tilde{x} = \tilde{x}'$ and $\tilde{y} = \tilde{y}'$. Finally, $w = (\tilde{x}\tilde{y})^{k'}$ for some $k'>1$.
\end{proof}

Denote by $p$ the minimal period of a palindrome $w$. By Lemma~\ref{PalPeriodicity}, we obtain palindromes $u, v$ such that $w = (uv)^*u$, $v\neq \varepsilon$, and $|uv| = p$. The string $uv$ is primitive. The representation $(uv)^*u$ is called \emph{canonical decomposition} of $w$. Let $w[i..j]$ be a subpalindrome of the string $w$. The number $(i + j)/2$ is the \emph{center} of $w[i..j]$. The center is integer [half-integer] if the subpalindrome has an odd [resp., even] length. For any integer $n$, $\shl(w,n)$ denotes the string $w[t{+}1..|w|]w[1..t]$, where $t=n\bmod |w|$.

\begin{lemma}[{see \cite[Chapter 8]{Lothaire}}]
A string $w$ is primitive iff for any integer $n$, the equality $\shl(w,n) = w$ implies $n\bmod |w| = 0$. \label{PrimitiveCriterion}
\end{lemma}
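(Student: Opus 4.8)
The plan is to prove both implications by contraposition, since negating either side yields a concrete object to manipulate. Write $m=|w|$ and observe first that, by definition, $\shl(w,n)$ depends only on $t=n\bmod m$; hence the stated condition is equivalent to: \emph{no} $t$ with $0<t<m$ satisfies $\shl(w,t)=w$. For the easy direction I would assume $w$ is not primitive, say $w=z^k$ with $k\ge 2$ and $|z|=m/k<m$. Shifting by one full block returns the string: straight from the definition, $\shl(w,|z|)=w[|z|{+}1..m]\,w[1..|z|]=z^{k-1}z=z^k=w$, while $|z|\bmod m=|z|\ne 0$. This exhibits a nontrivial fixing shift, so the ``implies'' condition fails; contrapositively, the condition forces primitivity.

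For the converse I would assume some $t$ with $0<t<m$ satisfies $\shl(w,t)=w$ and aim to show $w$ is a proper power. The structural fact that drives everything is that cyclic shifting is additive, namely $\shl(\shl(w,a),b)=\shl(w,a+b)$ for all integers $a,b$, which one checks by unwinding the definition modulo $m$. Consequently the set $S=\{\,s\bmod m : \shl(w,s)=w\,\}$ contains $0$ and is closed under addition, so it is a subgroup of the cyclic group $\mathbb{Z}/m\mathbb{Z}$. Being a subgroup of a cyclic group, $S$ is cyclic, generated by some $d\mid m$ with $0\le d<m$; since $t\in S$ and $t\ne 0$, we have $d\ne 0$, hence $0<d<m$, $d\mid m$, and $\shl(w,d)=w$.

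It then remains to convert a fixing shift by a divisor $d$ of $m$ into a power decomposition. From $\shl(w,d)=w[d{+}1..m]\,w[1..d]=w$ I read off $w[d{+}1..m]=w[1..m{-}d]$, that is, $w[i{+}d]=w[i]$ for $1\le i\le m{-}d$; so $d$ is a period of $w$ in the ordinary sense. Because $d\mid m$, a length-$m$ string with period $d$ is exactly the $(m/d)$-th power of its length-$d$ prefix, so $w=(w[1..d])^{m/d}$ with $m/d\ge 2$, and $w$ is not primitive. This is the contrapositive of the claimed implication, finishing the proof.

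I expect the main obstacle to be the middle step: passing from an arbitrary fixing shift $t$ to one by a divisor $d$ of $m$. The clean route is the subgroup/gcd reduction above, which rests entirely on the additivity of $\shl$; once $d\mid m$ is secured, the conversion to an ordinary period and then to a power is routine. An alternative would be to iterate the shift by $t$ and invoke Fine--Wilf-type period reasoning to extract $\gcd(t,m)$, but the group-theoretic argument is the most economical.
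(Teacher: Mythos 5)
Your proof is correct. Note that the paper does not prove this lemma at all: it is stated as a known fact with a citation to Lothaire, so there is no in-paper argument to compare against. Your two contrapositives are both sound: the easy direction ($w=z^k$ gives the nontrivial fixing shift $|z|$) is immediate from the definition, and the harder direction is handled cleanly by the additivity $\shl(\shl(w,a),b)=\shl(w,a{+}b)$, which makes the set of fixing shifts a subgroup of $\mathbb{Z}/m\mathbb{Z}$ and hence generated by a divisor $d$ of $m$; converting the resulting relation $w[i{+}d]=w[i]$ together with $d\mid m$ into $w=(w[1..d])^{m/d}$ is routine. This is a standard and complete argument, arguably more economical than the Fine--Wilf route you mention as an alternative.
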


\begin{lemma}
Suppose $(xy)^*x$ is a canonical decomposition of $w$ and $u$ is a subpalindrome of $w$ such that $|u| \ge |xy|{-}1$; then the center of $u$ coincides with the center of some $x$ or $y$ from the decomposition. \label{LongSubpalindromes}
\end{lemma}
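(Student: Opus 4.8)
The plan is to move everything into the bi‑infinite periodic word and reduce the statement to reflections plus primitivity. Write $p=|xy|$ for the minimal period of $w$, so that $w=(xy)^kx$ is a prefix of the infinite word $Z=(xy)^\omega$, which I extend to a bi‑infinite string of period $p$; recall that $xy$ is primitive. Put $\ell=|u|$ and let $c$ be the center of $u=w[i..j]$, so $2c=i+j$. I first record two facts about the \emph{block centers} of $Z$, that is, the centers of the copies of $x$ and of $y$. Since $x$ and $y$ are palindromes, a direct index computation (using $x[t]=x[|x|{+}1{-}t]$ and $y[t]=y[|y|{+}1{-}t]$ together with periodicity) shows that $Z$ is invariant under the reflection $s\mapsto 2c_0-s$ about any block center $c_0$; and since an $x$-center and the adjacent $y$-center are $\tfrac{|x|+|y|}{2}=p/2$ apart, the block centers are exactly the points $c_x+m\,(p/2)$, $m\in\mathbb{Z}$, where $c_x$ is the center of one fixed copy of $x$. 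Thus the goal becomes to prove $2c\equiv 2c_x\pmod p$.

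Next I would treat the main case $\ell\ge p$. Because $u$ is a palindrome, $Z[2c-s]=Z[s]$ for every $s$ in the window $[i..j]$; as this window has length at least $p$ it meets every residue class modulo $p$, so periodicity upgrades the identity to $Z[2c-s]=Z[s]$ for all integers $s$, i.e.\ reflection about $c$ is a global symmetry of $Z$. Composing it with the reflection about a block center $c_0$ (a global symmetry by the previous paragraph) yields the translation $s\mapsto s+2(c-c_0)$, which is therefore a period of the bi‑infinite $Z$, equivalently $\shl(xy,2(c-c_0))=xy$. Primitivity of $xy$ and Lemma~\ref{PrimitiveCriterion} then force $2(c-c_0)\equiv 0\pmod p$, so $c=c_0+m\,(p/2)$ is a block center. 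A short estimate from $1\le i\le j\le|w|$ and $|x|\le p-1$ confirms that this center actually belongs to a block of $w$, not merely of the infinite $Z$.

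The remaining case $\ell=p-1$ is the only real obstacle, and it is precisely where a naive reflection argument breaks down: the window $[i..j]$ now misses exactly one residue class modulo $p$, so reflection about $c$ need not extend to all of $Z$. The way around it is a short parity observation: the two positions $c-p/2$ and $c+p/2$ flanking $u$ lie at distance $p$ from one another, so $Z[c-p/2]=Z[c+p/2]$ by periodicity, and hence $u$ extends inside $Z$ to a palindrome $u^{+}$ of length $p+1\ge p$ with the \emph{same} center $c$. Applying the previous paragraph to $u^{+}$ (which may protrude from $w$, although its center $c$ does not) then finishes the proof. I expect the only routine‑but‑delicate points to be the bookkeeping with integer versus half‑integer centers and the verification that the located block really lies within $w$; the conceptual content sits entirely in the two facts of the first paragraph and the extension trick of the last.
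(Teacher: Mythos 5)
Your proof is correct. It ultimately rests on the same engine as the paper's argument --- manufacturing a cyclic shift that fixes the primitive word $xy$ and killing it with Lemma~\ref{PrimitiveCriterion} --- but the route is genuinely different. The paper first strips outer letters to reduce to $|u|\in\{|xy|{-}1,|xy|\}$, then fixes an explicit factorization $x=\eta a\theta$, $u=\theta y\eta$ and computes $xy=\shl(xy,|\lvec{\theta}|-|\eta|)$ by hand (with an elided parallel case when the first letter of $u$ falls inside $y$). You instead pass to the bi-infinite periodic word $(xy)^\omega$, observe that reflections about block centers are global symmetries, show that a subpalindrome of length at least $|xy|$ forces a second global reflection, and read off the same shift $2(c-c_0)$ as the composition of two reflections --- without ever choosing where the endpoints of $u$ sit relative to the blocks. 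This buys a cleaner, case-free treatment of all $|u|\ge|xy|$ and isolates the boundary case $|u|=|xy|{-}1$, which you resolve by extending $u$ inside $(xy)^\omega$ to a palindrome of length $|xy|{+}1$ with the same center (the paper goes the opposite direction, peeling $u$ down to the minimal lengths and computing there). The price is the two side conditions you correctly flag: that $2(c-c_0)$ is an integer (it is, since $2c$ and $2c_0$ are integers) and that the located block center lies inside $w$ rather than only in the bi-infinite extension; the latter follows from $1\le i\le j\le|w|$, $|u|\ge|xy|{-}1$ and $y\ne\varepsilon$, e.g.\ $c\ge |xy|/2\ge(1+|x|)/2$ rules out centers to the left of $w$'s first block, and symmetrically on the right. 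With those checks written out, the argument is complete and, to my taste, somewhat more transparent than the original.
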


\begin{proof}[of Lemma~\ref{LongSubpalindromes}]
Consider $w = \alpha u\beta$. Since a palindrome without the first and the last letter is a palindrome with the same center, it suffices to consider the cases $|u| = |xy|{-}1$ and $|u| = |xy|$. Assume $|u| = |xy|{-}1$ (the other case is similar). Suppose there are strings $\eta$, $\theta$ and a letter $a$ such that $x = \eta a\theta$, $\alpha = (xy)^n\eta a$ for some $n\ge 0$, and $u = \theta y\eta$. (If the first letter of $u$ lies inside $y$, the proof is the same.) Then $x = \lvec{x} = \lvec{\theta}a\lvec{\eta}$, $u = \lvec{u} = \lvec{\eta}y\lvec{\theta}$. Further, $xy = \eta a\theta y = \shl(a\theta y\eta, -|\eta|) =  \shl(a\lvec{\eta}y\lvec{\theta}, -|\eta|)$. But $a\lvec{\eta}y\lvec{\theta} = \shl(\lvec{\theta}a\lvec{\eta}y, |\lvec{\theta}|) = \shl(xy, |\lvec{\theta}|)$. Hence $xy = \shl(xy, |\lvec{\theta}| - |\eta|)$. Since $xy$ is primitive, it follows from Lemma~\ref{PrimitiveCriterion} that $|\lvec{\theta}| = |\eta|$. Thus, $\lvec{\theta} = \eta$ and $u = \lvec{\eta}y\eta$.
\end{proof}

\begin{example}
Consider $x = aba$, $y = ababa$, and $u = abaaba$. Obviously, $xyxyx$ is a canonical decomposition of $aba\cdot ababa\cdot aba\cdot ababa\cdot aba$, $u$ is a suffix-palindrome of $xyxyx$, and $|u| = |xy|{-}2$. The center of $u$ is not equal to the center of $x$ or $y$. Therefore, the bound in Lemma~\ref{LongSubpalindromes} is optimal.
\end{example}

Now we briefly discuss the approach used in \cite{GalilSeiferas}. The algorithm of \cite{GalilSeiferas} essentially relies on the following ``cancelation'' lemma.

\begin{lemma}[{see \cite[Lemma C4]{GalilSeiferas}}]
Suppose $w$ is a palindromic pair; then there exist palindromes $x$ and $y$ such that $w = xy$ and either $x$ is the longest prefix-palindrome of $w$ or $y$ is the longest suffix-palindrome of $w$.
\end{lemma}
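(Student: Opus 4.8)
The plan is to work with the longest prefix-palindrome $P$ and the longest suffix-palindrome $S$ of $w$, and to show that in the ``interesting'' case one actually has $w=Pz$ with $z$ a palindrome (the ``$y$ is the longest suffix-palindrome'' alternative will be needed only in a degenerate case). First I would record the single place where the hypothesis that $w$ is a palindromic pair is used: writing $w=uv$ with $u,v$ palindromes, the string $u$ is a prefix-palindrome and $v$ is a suffix-palindrome, so $|P|\ge|u|$ and $|S|\ge|v|$, whence $|P|+|S|\ge n$, where $n=|w|$. Several cases are then immediate: if $w$ is itself a palindrome take $(x,y)=(w,\varepsilon)$; if $u=P$ take $(x,y)=(u,v)$; if $v=S$ take $(x,y)=(u,v)$ and use the suffix alternative. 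Thus I may assume that $u$ is a proper prefix-palindrome of $P$, that $v$ is a proper suffix-palindrome of $S$, and that $|P|<n$. By Lemma~\ref{PalSuffix} the numbers $p_1:=|P|-|u|$ and $p_2:=|S|-|v|$ are positive periods of $P$ and of $S$ respectively, and $p_1<|v|$, so that $p_1+p_2<|S|$.

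Next I would exploit the inequality $|P|+|S|\ge n$: the prefix $P$ and the suffix $S$ overlap in the block $O:=w[n{-}|S|{+}1..|P|]$, whose length equals $c:=|P|+|S|-n=p_1+p_2$. Being a suffix of $P$, the block $O$ has period $p_1$; being a prefix of $S$, it has period $p_2$. Moreover the suffix $z:=w[|P|{+}1..n]$ that I want to be a palindrome is exactly the suffix $S[c{+}1..|S|]$ of $S$, so by Lemma~\ref{PalPeriodicity} it suffices to prove that $c$ is a period of $S$.

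The heart of the argument is to upgrade the two ``local'' periods into a period of all of $S$. Here I would use the asymmetry that $S$ carries the period $p_2$ along its whole length, while $p_1$ is known only on the prefix $O=S[1..p_1{+}p_2]$. A short downward induction on the index (replacing $i$ by $i-p_2$ and invoking the global period $p_2$ at both ends, with the base case $i\le p_2$ covered by the period $p_1$ on $O$) shows that $p_1$ is in fact a period of the whole $S$. Combining the periods $p_1$ and $p_2$ of $S$ then yields the period $c=p_1+p_2\le|S|$; since $S$ is a palindrome, Lemma~\ref{PalPeriodicity} lets me write $S=(xy)^*x$ with $|xy|=c$, and deleting the first block shows that $z=S[c{+}1..|S|]$ is a palindrome. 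Hence $w=Pz$ is the required factorization, with $x=P$ the longest prefix-palindrome.

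I expect the main obstacle to be precisely this last propagation step, i.e.\ turning the overlapping local periods into a single period of $S$; it is a Fine--Wilf-type phenomenon, and the delicate point is that the two periods are witnessed on different (though overlapping) factors, so one must crucially use that one of them is global. A secondary issue needing care is the bookkeeping of the degenerate cases ($u=P$, $v=S$, or $w$ a palindrome), which is where the ``longest suffix-palindrome'' alternative of the statement is genuinely used: already the palindromic pair $aba\cdot baab$ (whose longest prefix-palindrome leaves the non-palindrome $ab$) shows that the prefix alternative alone can fail.
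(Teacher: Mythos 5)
The paper does not actually prove this lemma: it is imported verbatim from Galil and Seiferas (their Lemma C4), so there is no internal proof to compare yours against. On its own merits, your argument is correct and complete. The reduction is sound: in the non-degenerate case you have $|u|<|P|$, $|v|<|S|$, $|P|<n$, so by Lemma~\ref{PalSuffix} the numbers $p_1=|P|-|u|$ and $p_2=|S|-|v|$ are positive periods of $P$ and $S$, the overlap $O=S[1..p_1{+}p_2]$ is simultaneously a suffix of $P$ (hence carries $p_1$) and a prefix of $S$, and $w[|P|{+}1..n]=S[c{+}1..|S|]$ with $c=p_1{+}p_2<|S|$. Your propagation step is the crux and it holds: for $i\le p_2$ both $i$ and $i{+}p_1$ lie in $O$, and for $i>p_2$ the chain $S[i]=S[i{-}p_2]=S[i{-}p_2{+}p_1]=S[i{+}p_1]$ uses the global period $p_2$ at both ends and the induction hypothesis in the middle, with all indices easily checked to be in range; two global periods then combine to give the period $p_1{+}p_2$, and Lemma~\ref{PalPeriodicity} writes $S=(xy)^{*}x$ with $|xy|=c$, whose suffix $S[c{+}1..|S|]=(xy)^{*}x$ is again a palindrome. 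The bookkeeping of the degenerate cases is also right, and your example $aba\cdot baab$ correctly witnesses that the suffix alternative cannot be dispensed with. This is essentially the same periodicity-based route as the original Galil--Seiferas argument, but it is pleasant that you derive it entirely from the paper's own Lemmas~\ref{PalPeriodicity} and~\ref{PalSuffix} plus one elementary period-transfer observation.
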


Unfortunately, it seems that even for the case of three palindromes, there are no similar results. Indeed, one can expect that if the string $s$ is a concatenation of three nonempty palindromes, then there are palindromes $x$, $y$, $z$ such that $s = xyz$ and at least one of the following statements holds:
\begin{enumerate}
\item $x$ is the longest proper prefix-palindrome of $s[1..|s|{-}1]$;
\item $z$ is the longest proper suffix-palindrome of $s[2..|s|]$;
\item $xy$ is the longest proper prefix that is a palindromic pair;
\item $yz$ is the longest proper suffix that is a palindromic pair.
\end{enumerate}

The following example shows that this hypothesis does not hold.

\begin{example}
Consider the following string that is a concatenation of three nonempty palindromes: (For convenience, some groups of letters are separated by spaces.)
$$
\underbrace{aba\;aba} \underbrace{b\;aba\;\mathbf{c}\;aba\;b\;aba\;aba\;aba\;aba\;b\;aba\;\mathbf{c}\;aba\;b} \underbrace{aba\;b\;aba}\enspace.
$$
It turns out that there are no palindromes $x$, $y$, $z$ such that $xyz$ is equal to this string and $x$, $y$, $z$ satisfy the hypothesis. To prove it, let us emphasize subpalindromes corresponding to the points of the hypothesis:
$$
\begin{array}{ll}
1. &\underbrace{aba\;aba\;b\;aba\;\mathbf{c}\;aba\;b\;aba\;aba}\;aba\;aba\;b\;aba\;\mathbf{c}\;aba\;b\;aba\;b\;aba, \\
2. &aba\;aba\;b\;aba\;\mathbf{c}\;aba\;b\;aba\;aba\;aba\;aba\;b\;aba\;\mathbf{c}\underbrace{aba\;b\;aba\;b\;aba}, \\
3. &\underbrace{aba}\;\underbrace{aba\;b\;aba\;\mathbf{c}\;aba\;b\;aba\;aba\;aba\;aba\;b\;aba\;\mathbf{c}\;aba\;b\;aba}\;b\;aba,\\
4. &aba\;a\underbrace{ba\;b\;aba\;\mathbf{c}\;aba\;b\;aba\;aba\;aba\;aba\;b\;aba\;\mathbf{c}\;aba\;b\;ab}\;\underbrace{ababa}\enspace.
\end{array}
$$
\end{example}

\section{Palindromic iterator}\label{palindromicIterator}

Let $w[i..j]$ be a subpalindrome of a string $w$. The number $\lfloor (j{-}i{+}1)/2 \rfloor$ is the \emph{radius} of $w[i..j]$. Let $\mathcal{C} = \{c>0\colon 2c\text{ is an integer}\}$ be the set of all possible centers for subpalindromes.
\emph{Palindromic iterator} is the data structure containing a string $text$ and supporting the following operations on it:
\begin{enumerate}
\item $\append_i(a)$ appends the letter $a$ to the end;
\item $\maxPal$ returns the center of the longest suffix-palindrome;
\item $\rad(x)$ returns the radius of the longest subpalindrome with the center $x$;
\item $\nextPal(x)$ returns the center of the longest proper suffix-palindrome of the suffix-palindrome with the center $x$.
\end{enumerate}

\begin{example}
Let $text = aabacabaa$. Then $\maxPal = 5$. Values of $\rad$ and $\nextPal$ are listed in the following table (the symbol ``$-$'' means undefined value):
$$
\begin{array}{r||c|c|c|c|c|c|c|c|c|c|c|c|c|c|c|c|c|c|c}
text & & a & & a & & b & & a & & c & & a & & b & & a & & a & \\
\hline
x& 0.5 & 1 & 1.5 & 2 & 2.5 & 3 & 3.5 & 4 & 4.5 & 5 & 5.5 & 6 & 6.5 & 7 & 7.5 & 8 & 8.5 & 9 & 9.5\\
\hline
\rad(x) & 0 & 0 & 1 & 0 & 0 & 1 & 0 & 0 & 0 & 4 & 0 & 0 & 0 & 1 & 0 & 0 & 1 & 0 & 0 \\
\hline
\nextPal(x) & - & - & - & - & - & - & - & - & - & 8.5 & - & - & - & - & - & - & 9 & 9.5 & -
\end{array}
$$
\end{example}

A \emph{fractional array} of length $n$ is an array with $n$ elements indexed by the numbers $\{x \in \mathcal{C} \colon 0 < x \le \frac{n}2\}$. Fractional arrays can be easily implemented using ordinary arrays of double size. Let $\refl(x, y) = y + (y - x)$ be the function returning the position symmetric to $x$ with respect to $y$.

\begin{proposition}
Palindromic iterator can be implemented such that $\append_i$ requires amortized $O(1)$ time and all other operations require $O(1)$ time.
\label{Iterator}
\end{proposition}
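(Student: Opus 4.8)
The plan is to realize the iterator by an online version of Manacher's algorithm, augmented with the chain of suffix-palindromes maintained as in a palindromic tree (\emph{eertree}). Besides $text$ I would store a fractional array holding $\rad(x)$ for every center, a fractional array holding the current values of $\nextPal$, and a single variable holding $\maxPal$. With this layout the three query operations are immediate: $\maxPal$ reads the variable, while $\rad(x)$ and $\nextPal(x)$ are single lookups into the respective fractional arrays, all in worst-case $O(1)$ time. Hence the whole difficulty is concentrated in $\append_i$, which must update these structures in amortized $O(1)$ time.

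For $\append_i(a)$ I would proceed as in eertree insertion. Every suffix-palindrome of $w[1..n]$ of length $\ge 2$ is obtained from a suffix-palindrome $T$ of $w[1..n{-}1]$ by prepending and appending $a$ (so $aTa$ is a palindrome exactly when the letter immediately preceding the occurrence of $T$ equals $a$), and symmetric extension preserves the center. Therefore I would walk the chain of suffix-palindromes of $w[1..n{-}1]$ starting from the old $\maxPal$ and following $\nextPal$, stopping at the first (hence longest) suffix-palindrome $S$ that can be extended by $a$; its center becomes the new $\maxPal$ and $\rad$ there grows by one, and if no such $S$ exists the new longest suffix-palindrome is the single letter $a$. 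To install $\nextPal$ at the new $\maxPal$, I would continue the walk from the point where it stopped until the next extendable suffix-palindrome is met, exactly as a palindromic-tree suffix link is computed. Finally, the radii of the centers that cease to be suffix-palindrome centers at this step are now final, and I would fill them in with the standard Manacher mirror copy based at $\maxPal$, reading the already-known radius at the symmetric center $\refl(x,\maxPal)$.

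The amortized bound I would derive from the usual potential argument. Taking the length $L$ of the longest suffix-palindrome as potential, one checks (using that the inner part $w[i{+}1..n{-}1]$ of a suffix-palindrome $w[i..n]$ is a suffix-palindrome of $w[1..n{-}1]$) that $L$ increases by at most two per letter, while each step of the chain walk strictly decreases the length of the current candidate; summing over all $n$ letters makes the total number of chain steps, and likewise the total number of mirror copies, linear, so every operation inside $\append_i$ costs amortized $O(1)$.

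The step I expect to be the main obstacle is the correctness of the \emph{lazy} maintenance of $\nextPal$: at each $\append_i$ I rewrite $\nextPal$ only at the new $\maxPal$, yet many deeper suffix-palindromes of $w[1..n]$ also grew with respect to $w[1..n{-}1]$, and a priori their $\nextPal$ values could change. Here the combinatorial core of Section~\ref{palindromicPairs} is needed. By Lemmas~\ref{PalSuffix} and~\ref{PalPeriodicity} the longest proper suffix-palindrome of a palindrome $P$ is $P$ with its shortest period deleted, so $\nextPal$ at the center of $P$ equals that center shifted toward the right end by half the minimal period of $P$; it thus suffices to show that the minimal period of a suffix-palindrome $Q$ that is \emph{not} the longest is unchanged under symmetric extension. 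Since such a $Q$ is a proper suffix-palindrome of the strictly longer current longest suffix-palindrome $L$, I would apply Lemma~\ref{LongSubpalindromes} to the canonical decomposition of $L$: once $|Q|$ is at least the minimal period of $L$ minus one, the center of $Q$ must coincide with the center of a block of that decomposition, which pins the minimal period of $Q$ to that of $L$ and makes it invariant under the extension. The genuinely delicate case is when $Q$ is too short for Lemma~\ref{LongSubpalindromes} to apply; here I would fall back on the primitivity statements of Lemmas~\ref{FullPeriodPalPair} and~\ref{PrimitiveCriterion} to bound how a minimal period may change, and to show that any such change is captured by the portion of the chain that the walk actually traverses at that very step. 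Making this invariant precise is the heart of the argument; the remainder is bookkeeping on the two fractional arrays.
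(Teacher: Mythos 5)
Your reduction of the query operations to array lookups and your amortization of the chain walk are fine, but the step you yourself flag as the main obstacle---the lazy maintenance of $\nextPal$---is a genuine gap, and the invariant you propose to close it is false. Take $text=cababacababa$ and append $c$. The chain of suffix-palindrome centers changes from $7,10,11,12,12.5$ (palindromes $ababacababa$, $ababa$, $aba$, $a$, $\varepsilon$) to $7,10,13,13.5$ (palindromes $cababacababac$, $cababac$, $c$, $\varepsilon$). Your walk starts at the old $\maxPal=7$, finds it extendable, steps once more to confirm that $10$ is extendable, and stops; it never visits $11$, $12$, $12.5$ and so never learns that they dropped. The pointer $\nextPal(10)$ is left equal to $11$, which is no longer the center of any suffix-palindrome, whereas the correct value is $13$. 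The repair you sketch---that the minimal period of a non-longest suffix-palindrome is preserved under symmetric extension---fails on the same example: $ababa$ has minimal period $2$ while its extension $cababac$ has minimal period $6$, so $\nextPal$ at center $10$ must jump from one half-period ($+1$) to three ($+3$). This is precisely the ``delicate case'' you defer to Lemmas~\ref{FullPeriodPalPair} and~\ref{PrimitiveCriterion}, and no such argument can succeed: the centers that lose suffix-palindrome status may be interleaved arbitrarily deep inside the surviving chain, and a scheme that rewrites $\nextPal$ only at the new $\maxPal$ cannot reach them.

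The missing idea is a mechanism that enumerates, in amortized constant time per element, \emph{all} centers that cease to be suffix-palindrome centers, so that each can be spliced out of a doubly linked list (which automatically repairs the $\nextPal$ pointer of every survivor, however deep it sits). The paper obtains this from symmetry: while running the Manacher loop it records, in an array $lend$, for each position $i$ the set of centers $x<\maxPal$ whose maximal palindrome begins at $i{+}1$; after a letter is appended, the centers that drop are exactly the reflections $\refl(y,\maxPal)$ of the elements $y\in lend[t]$, where $t{+}1$ is the left end of the new longest suffix-palindrome. Since each $\append_i$ links only two new nodes, the total number of unlinks over $n$ calls is at most $2n$, which gives the amortized bound. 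Some version of this (or the full suffix-link structure of an eertree, from which the new chain is read off rather than patched in place) is indispensable. A smaller point: since you do not touch the radii of active centers beyond $\maxPal$ at each step, $\rad(x)$ for $x>\maxPal$ cannot be a plain array lookup; it must be computed at query time as $\min\{r[\refl(x,\maxPal)],\,|text|-\lfloor x\rfloor\}$, as the paper does.
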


\begin{proof}
Our implementation uses a variable $s$ containing the center of the longest suffix-palindrome of $text$, and a fractional array $r$ of length $2s$ such that for each $i \in \mathcal{C}$, $0 < i \le s$, the number $r[i]$ is the radius of the longest subpalindrome centered at $i$. Obviously, $\maxPal = s$. Let us describe $\rad(x)$. If $x \le s$, $\rad(x) = r[x]$. If $x > s$, then each palindrome with the center $x$ has a counterpart with the center $\refl(x,s)$. On the other hand, $\rad(x)\le |text|{-}\lfloor x\rfloor$, implying $\rad(x) = \min\{r[\refl(x, s)], |text|{-}\lfloor x \rfloor\}$. To implement $\nextPal$ and $\append_i$, we need additional structures.

We define an array $lend[0..|text|{-}1]$ and a fractional array $nodes[\frac{1}2..|text|{+}\frac{1}2]$ to store information about maximal subpalindromes of $text$. Thus, $lend[i]$ contains centers of some maximal subpalindromes of the form $text[i{+}1..j]$. Precisely, $lend[i] = \{x\in \mathcal{C} \colon x<s \mbox{ and } \lceil x\rceil - \rad(x) = i{+}1\}$. Each center $x$ is also considered as an element of a biconnected list with the fields $x.\mathrm{next}$ and $x.\mathrm{prev}$ pointing at other centers. We call such elements nodes and store in the array $nodes$. The following \emph{invariant of palindromic iterator} holds.

\setlength{\leftskip}{8mm}
\noindent
\emph{Let $c_0<\ldots<c_k$ be the centers of all suffix-palindromes of $text$. For each $j\in\overline{0,k{-}1}$, $nodes[c_j].\mathrm{next} = c_{j{+}1}$ and $nodes[c_{j{+}1}].prev = c_j$}.

\setlength{\leftskip}{0pt}\smallskip
Clearly, $c_0 = s$, $c_k =|text|{+}\frac{1}2$. Let $\link(x)$ and $\unlink(x)$ denote the operations of linking $x$ to the end of the list and removing $x$ from the list, respectively. Obviously, $\nextPal(x) = nodes[x].\mathrm{next}$. The following pseudocode of $\append_i$ uses the three-operand $\mathbf{for}$ loop like in the C language.
\begin{algorithmic}[1]
{\algtext*{EndFunction}%
\Function{${\append_i}$}{$a$}
    \For{$(s_0 \gets s;\;s < |text| + 1;\;s \gets s + \frac{1}2)$}\label{lst:line:manacherBeg}
		\State $r[s] \gets \min(r[\refl(s, s_0)], |text| - \lfloor s\rfloor);$ \Comment{fill $r$}
		\If{$\lfloor s\rfloor + r[s] = |text| \mathrel{\mathbf{and}} text[\lceil s\rceil{-}r[s]{-}1] = a$}
			\State $r[s] \gets r[s] + 1;$ \Comment{here $s$ is the center of the longest suffix-pal.}
			\State $\mathbf{break};$
		\EndIf
        \State $lend[\lceil s\rceil{-}r[s]{-}1] \gets lend[\lceil s\rceil{-}r[s]{-}1] \cup \{s\};$ \Comment{fill $lend$}
	\EndFor
    \State $text \gets text \cdot a;$ \label{lst:line:manacherEnd}
    \State $\link(nodes[|text|])$; $\link(nodes[|text|{+}\frac{1}2]);$\Comment{adding trivial suffix-pals.}\label{lst:line:addPal1}
    \NForAll{$x \mathrel{\mathbf{in}} lend[\lceil s\rceil - \rad(s)]$} \label{lst:line:deleteOldSufBeg}
        \State $\unlink(nodes[\refl(x, s)]);$ \label{lst:line:deleteOldSufEnd}\Comment{removing invalid centers from the list}
\EndFunction}
\end{algorithmic}
The code in lines \ref{lst:line:manacherBeg}--\ref{lst:line:manacherEnd} is a version of the main loop of Manacher's algorithm \cite{Manacher}; see also~\cite[Chapter 8]{Stringology}. The array $lend$ is filled simultaneously with~$r$. Let us show that the invariant is preserved.

Suppose that a symbol is added to $text$ and the value of $s$ is updated. Denote by $S$ the set of centers $x > s$ such that the longest subpalindrome centered at $x$ has lost its status of suffix-palindrome on this iteration. Once we linked the one-letter and empty suffix-palindromes to the list, it remains to remove the elements of $S$ from it. Let $t = \lceil s\rceil - \rad(s)$. Since $text[t..|text|]$ is a palindrome, we have $lend[t] = \{\refl(x, s) \colon x \in S\}$. Thus, lines~\ref{lst:line:deleteOldSufBeg}--\ref{lst:line:deleteOldSufEnd} unlink $S$ from the list.

Since $\append_i$ links exactly two nodes to the list, any sequence of $n$ calls to $\append_i$ performs at most $2n$ unlinks in the loop~\ref{lst:line:deleteOldSufBeg}--\ref{lst:line:deleteOldSufEnd}. Further, any such sequence performs at most $2n$ iterations of the loop~\ref{lst:line:manacherBeg}--\ref{lst:line:manacherEnd} because each iteration increases $s$ by $\frac{1}2$ and $s \le |text|$. Thus, $\append_i$ works in the amortized $O(1)$ time.
\end{proof}

\begin{example}
Let $text = aabacaba$. The list of centers of suffix-palindromes contains $5,7,8,8.5$. Now we perform $\append_i(a)$ using the above implementation. We underline suffix-palindromes of the source string for convenience: $a\underline{{{a}ba}c\underline{ab\underline{a}}}a$. The centers $9$, $9.5$ are linked to the list in the line~\ref{lst:line:addPal1}. The set of centers to be removed from the list is $S = \{7,8\}$. Let $t = \lceil s\rceil - \rad (s) = 5 - 4 = 1$. Since $lend[t] = \{2,3\}$, the loop~\ref{lst:line:deleteOldSufBeg}--\ref{lst:line:deleteOldSufEnd} unlinks $S = \{\refl(i,s) \colon i\in lend[t]\}$ from the list. So, the new list contains $5,8.5,9,9.5$.
\end{example}

\section{Palindromic Engine}\label{onlineAlgorithmForRecognition}

\emph{Palindromic engine} is the data structure containing a string $text$, bit arrays~$m$ and~$res$ of length $|text|{+}1$, and supporting a procedure $\append(a, b)$ such that
\begin{enumerate}
\item $\append(a,b)$ appends the letter $a$ to $text$, sets $m[|text|]$ to $b$, and calculates $res[|text|]$;
\item $m$ is filled by $\append$ except for the bit $m[0]$ which is set initially;
\item $res[i] = 1$ iff there is $j\in\N$ such that $0 \le j < i$, $m[j] = 1$, and $text[j{+}1..i]\in \mathrm{Pal}$ (thus $res[0]$ is always zero).
\end{enumerate}

The following lemma is an immediate consequence of the third condition.

\begin{lemma}
Let $L$ be a language. Suppose that for any $i\in\overline{0,|text|}$, $m[i] = 1$ iff $text[1..i]\in L$; then for any $i\in\overline{0,|text|}$, $res[i] = 1$ iff $text[1..i] \in L\cdot \mathrm{Pal}$. \label{LPalRecognizer}
\end{lemma}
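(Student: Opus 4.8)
The plan is to unwind the definitions and show that the two sides of the biconditional match exactly. By hypothesis, the bit array $m$ marks precisely the prefixes of $text$ that lie in $L$: that is, $m[j]=1$ iff $text[1..j]\in L$. By the defining property of the palindromic engine, $res[i]=1$ iff there exists $j$ with $0\le j<i$, $m[j]=1$, and $text[j{+}1..i]\in\mathrm{Pal}$. So I would simply substitute the characterization of $m$ into the characterization of $res$.

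First I would fix an arbitrary $i\in\overline{0,|text|}$ and expand $res[i]=1$ using the third condition of the palindromic engine. This gives the existence of an index $j$ with $0\le j<i$ such that $m[j]=1$ and $text[j{+}1..i]$ is a nonempty palindrome. Next I would use the lemma's hypothesis to replace the condition $m[j]=1$ by the equivalent statement $text[1..j]\in L$. At this point $res[i]=1$ is equivalent to: there is some $j<i$ with $text[1..j]\in L$ and $text[j{+}1..i]\in\mathrm{Pal}$.

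Then I would observe that this is exactly the definition of membership in the concatenation $L\cdot\mathrm{Pal}$: the string $text[1..i]$ factors as a prefix $text[1..j]$ belonging to $L$ followed by a nonempty palindrome $text[j{+}1..i]$ belonging to $\mathrm{Pal}$. Conversely, any such factorization of $text[1..i]$ witnesses the existence of the required index $j$. Hence $res[i]=1$ iff $text[1..i]\in L\cdot\mathrm{Pal}$, as claimed.

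This is a routine verification rather than a genuine difficulty; the only point requiring a little care is that $\mathrm{Pal}$ consists of \emph{nonempty} palindromes, which aligns with the strict inequality $j<i$ in the engine's specification, so the factorization always has a nonempty second factor. There is no real obstacle here—the substance of the work lies in \emph{building} a palindromic engine satisfying the third condition online and in linear time, which is the content of the main theorem; this lemma merely records that such an engine immediately yields a recognizer for $L\cdot\mathrm{Pal}$.
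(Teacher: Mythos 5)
Your proof is correct and matches the paper's intent exactly: the paper gives no written proof, stating only that the lemma ``is an immediate consequence of the third condition,'' and your argument is precisely that substitution of the hypothesis on $m$ into condition~3 of the palindromic engine, yielding the definition of membership in $L\cdot\mathrm{Pal}$. Your remark about the nonemptiness of the second factor aligning with the strict inequality $j<i$ is a correct and worthwhile observation.
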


Let $f, g$ be functions of integer argument. We say that a palindromic engine works in $f(n)$ time and $g(n)$ space if any sequence of $n$ calls to $\append$ on empty engine requires at most $f(n)$ time and $g(n)$ space.

\begin{proposition}
Suppose a palindromic engine works in $f(n)$ time and space, and a language $L$ is online recognizable in $g(n)$ time and space; then the language $L\cdot\mathrm{Pal}$ is online recognizable in $f(n) + g(n) + O(n)$ time and space. \label{PalindromeEngineReduction}
\end{proposition}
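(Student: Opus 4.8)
The plan is to combine a palindromic engine with an online recognizer for $L$ and run them in parallel, feeding each input letter to both. The palindromic engine maintains a string $text$ together with bit arrays $m$ and $res$, where $m$ is meant to encode membership in $L$ prefix-by-prefix, and $res$ is guaranteed by the engine's specification to encode membership in $L\cdot\mathrm{Pal}$ via the condition $res[i]=1$ iff some $j<i$ has $m[j]=1$ and $text[j{+}1..i]\in\mathrm{Pal}$. So the key is to make the hypothesis of Lemma~\ref{LPalRecognizer} hold, namely that $m[i]=1$ iff $text[1..i]\in L$ for every $i$.

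First I would set up the coupling. When the online recognition algorithm for $L\cdot\mathrm{Pal}$ reads a letter $a$, it forwards $a$ to the $L$-recognizer, which (being online) decides whether the current prefix $text[1..i]\in L$ after reading $a$; call this decision bit $b$. Then I would call $\append(a,b)$ on the palindromic engine. Before processing any letters, I must correctly initialize the bit $m[0]$, which by the engine's specification is set once at the start rather than by $\append$: I would set $m[0]=1$ iff $\varepsilon\in L$, a single query that the $L$-recognizer answers on the empty input. With this initialization, after each $\append(a,b)$ the value $b$ written into $m[|text|]$ is exactly the indicator of $text[1..i]\in L$, and $m[0]$ agrees with $\varepsilon\in L$, so the hypothesis of Lemma~\ref{LPalRecognizer} is satisfied for every $i\in\overline{0,|text|}$.

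Given that hypothesis, Lemma~\ref{LPalRecognizer} immediately yields $res[i]=1$ iff $text[1..i]\in L\cdot\mathrm{Pal}$, so after reading the $i$th letter the algorithm outputs the bit $res[i]$ and thereby decides online whether the current prefix lies in $L\cdot\mathrm{Pal}$. The correctness of the online decision at each step is thus reduced to the already-stated correctness of the engine and of the $L$-recognizer.

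It remains to account for the resources. Each input letter triggers one step of the $L$-recognizer and one $\append$ call; over $n$ letters these cost at most $g(n)$ and $f(n)$ time and space respectively by hypothesis. The only additional cost is the constant bookkeeping per letter for the coupling (forwarding the letter and reading off $res[i]$), which is $O(n)$ in total, giving the claimed $f(n)+g(n)+O(n)$ bound. The one point that needs care, and which I expect to be the main obstacle, is the interface alignment at index $0$: the engine writes $m[i]$ only for $i\ge 1$ via $\append$, so the empty-prefix case must be handled separately by the explicit initialization of $m[0]$ described above; overlooking it would break the ``iff'' for the prefix $text[1..i]$ when the matching palindrome occupies the entire prefix (i.e. when $j=0$).
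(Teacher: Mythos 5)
Your proposal is correct and follows essentially the same route as the paper's own proof: initialize $m[0]$ according to whether $\varepsilon\in L$, feed each letter to the $L$-recognizer, call $\append(w[i],b)$ with the resulting decision bit, and conclude via Lemma~\ref{LPalRecognizer}, with the stated resource bounds. The extra attention you give to the $m[0]$ initialization is exactly the one detail the paper also singles out, so there is nothing to add.
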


\begin{proof}
Assume that in the palindromic engine $m[0] = 1$ iff $\varepsilon\in L$. We scan the input string $w$ sequentially from left to right. To process the $i$th letter of $w$, we feed it to the algorithm recognizing $L$ and then call $\append(w[i],1)$ or $\append(w[i],0)$ depending on whether $w[1..i]$ belongs to $L$ or not. Thus, by Lemma~\ref{LPalRecognizer}, $res[i] = 1$ iff $w[1..i] \in L\cdot\mathrm{Pal}$. Time and space bounds are obvious.
\end{proof}

We use the palindromic iterator in our implementation of palindromic engine. Let $\len(x)$ be the function returning the length of the longest subpalindrome with the center $x$, i.e., $\len(x) = 2\cdot\rad(x) + \lfloor x\rfloor - \lfloor x-\frac{1}2\rfloor$. The operations of bitwise ``or'', ``and'', ``shift'' are denoted by $\bor$, $\band$, $\shl$ respectively. Let $x \orgets y$ be short for $x \gets (x\bor y)$. The naive $O(n^2)$ time implementation is as follows:
\begin{algorithmic}[1]
{\algtext*{EndFunction}%
\Function{$\append$}{$a, b$}
    \State $\append_i(a);\;n\gets |text|;\;res[n] \gets 0;\;m[n] \gets b;$
    \NFor{$(x \gets \maxPal;\;x \neq n{+}\frac{1}2;\;x \gets \nextPal(x))$}
        \State $res[n] \orgets m[n{-}\len(x)];$ \Comment{loop through all suffix-palindromes}
\EndFunction}
\end{algorithmic}

To improve the naive implementation, we have to decrease the number of suffix-palindromes to loop through. This can be done using ``leading'' subpalindromes.

A nonempty string $w$ is \emph{cubic} if its minimal period $p$ is at most $|w|/3$. A subpalindrome $u=w[i..j]$ is \emph{leading} in $w$ if any period $p$ of any longer subpalindrome $w[i'..j]$ satisfies $2p>|u|$. Thus, all non-leading subpalindromes are suffixes of leading cubic subpalindromes. For example, the only cubic subpalindrome of $w = aabababa$ is $w[2..8] = abababa$, and the only non-leading subpalindrome is $w[4..8] = ababa$.

\begin{lemma}
Let $s = w[i..j]$ be a leading subpalindrome of $w$, with the canonical decomposition $(uv)^*u$, and $t=w[i'..j]$ be the longest proper suffix-palindrome of $s$ that is leading in $w$. Then $t = u$ if $s = uvu$, and $t = uvu$ otherwise.\label{NextLeadingPal}
\end{lemma}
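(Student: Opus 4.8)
The plan is to reduce the notion of being ``leading in $w$'' to a purely internal statement about the suffix-palindromes of $s$, and then read off the answer from the canonical decomposition. Write $p=|uv|$ for the minimal period of $s$, so that $s=(uv)^k u$ with $k\ge 1$ and $|u|<p$. First I would record that the proper suffix-palindromes of $s$ are in bijection with the nonzero periods of $s$ (a suffix-palindrome of length $\ell$ corresponds to the period $|s|-\ell$: one direction is Lemma~\ref{PalSuffix}, the other is an immediate reversal computation). Since $s$ is leading, every subpalindrome of $w$ ending at the right end of $s$ that is strictly longer than $s$ has all of its periods $>|s|/2$, hence $>|t'|/2$ for every suffix-palindrome $t'$ of $s$. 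Consequently a suffix-palindrome $t'$ is leading in $w$ \emph{iff} every suffix-palindrome of $s$ longer than $t'$ has all its periods $>|t'|/2$; this is the key reduction, after which $w$ disappears from the analysis.

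Next I would pin down the long suffix-palindromes. By Lemma~\ref{LongSubpalindromes}, every suffix-palindrome of $s$ of length $\ge p{-}1$ has its center at the center of one of the blocks of the decomposition; combined with the symmetry and $p$-periodicity of $s$, this forces such a suffix-palindrome to be exactly one of the strings $(uv)^m u$ with $1\le m\le k$. Moreover, for $m\ge 2$ the prefix $(uv)^2$ has length $2p$, so the primitivity of $uv$ (via the standard periodicity argument) shows that $(uv)^m u$ has minimal period $p$, whence all of its periods are $\ge p$. Thus the suffix-palindromes of $s$ of length $\ge p$ are precisely $(uv)^m u$ for $1\le m\le k{-}1$ (taking $m=k$ gives $s$ itself), and the longest proper suffix-palindrome has length $|s|-p$.

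With the reduction and this list in hand, the two cases fall out by a direct period count. If $s=uvu$ (that is, $k=1$), the longest proper suffix-palindrome is $u$, of length $|s|-p=|u|<p$; the only longer suffix-palindrome of $s$ is $s$, whose minimal period $p$ exceeds $|u|\ge |u|/2$, so $u$ is leading and $t=u$. If $k\ge 2$, I would first check $uvu=(uv)^1u$: every longer suffix-palindrome is some $(uv)^m u$ with $m\ge 2$, all of whose periods are $\ge p>(p+|u|)/2=|uvu|/2$, so $uvu$ is leading. Then I would show that every proper suffix-palindrome longer than $uvu$ is \emph{not} leading: each such palindrome is $(uv)^m u$ with $2\le m\le k{-}1$ and has length $\ge 2p$, so $s$ itself is a strictly longer subpalindrome of $w$ with period $p\le\tfrac12|(uv)^m u|$, violating the leading condition. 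Hence the longest leading proper suffix-palindrome is $uvu$, i.e. $t=uvu$.

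The main obstacle is the middle step: correctly characterizing the suffix-palindromes of length $\ge p$ as exactly the strings $(uv)^m u$ and certifying that each has minimal period $p$. Both facts are what make the period count in the last paragraph tight, because of the asymmetry that ``non-leading'' needs only one bad period while ``leading'' is a universal statement over all longer subpalindromes: the characterization must be complete, with no stray suffix-palindrome of length strictly between $|uvu|$ and $|s|$. The delicate points are the boundary between suffix-palindromes of length $\ge p$ and those of length $<p$ (which is exactly where the $k=1$ case separates, since then the longest proper suffix-palindrome already has length $<p$) and the invocation of primitivity of $uv$ to rule out periods smaller than $p$.
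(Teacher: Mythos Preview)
Your proof is correct and follows the same route as the paper, using Lemma~\ref{LongSubpalindromes} to pin down the long suffix-palindromes of $s$ as exactly the strings $(uv)^m u$ and then reading off which of them are leading. The paper's own proof is extremely terse (two sentences, the second being ``the assertion follows from Lemma~\ref{LongSubpalindromes}''), so your version simply supplies the details the paper leaves to the reader: the reduction of ``leading in $w$'' to an internal condition on suffix-palindromes of $s$, the explicit enumeration of suffix-palindromes of length $\ge p$, and the primitivity/Fine--Wilf check that each $(uv)^m u$ with $m\ge 2$ has minimal period exactly $p$.
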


\begin{proof}
Let $s = uvu$. By Lemma~\ref{PalSuffix}, $u$ is the longest proper suffix-palindrome of $s$. Clearly, $u$ is leading in $w$. If $s \neq uvu$, the assertion follows from Lemma~\ref{LongSubpalindromes}.
\end{proof}

\begin{lemma}
A string of length $n$ has at most $\log_{\frac{3}2} n$ leading suffix-palindromes.\label{LogLeadingPal}
\end{lemma}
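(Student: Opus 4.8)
The plan is to show that consecutive leading suffix-palindromes grow in length by a factor strictly greater than $\frac{3}{2}$, so that a string of length $n$ cannot contain more than $\log_{\frac32} n$ of them.

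First I would observe that any two suffix-palindromes of $w$ sharing the right endpoint are nested, the shorter being a suffix of the longer. Consequently the leading suffix-palindromes of $w$ form a single chain totally ordered by length, and every one of them except the shortest has a well-defined immediately shorter leading suffix-palindrome, namely its longest proper suffix-palindrome that is leading in $w$. This is precisely the pair $(s,t)$ controlled by Lemma~\ref{NextLeadingPal}.

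Next I would estimate the length ratio of two consecutive links of this chain. Let $s$ be a leading suffix-palindrome with canonical decomposition $(uv)^*u$ and let $t$ be the next shorter leading suffix-palindrome. By Lemma~\ref{NextLeadingPal} there are two cases. If $s = uvu$, then $t = u$, so $|s| = 2|u| + |v| > 2|u| = 2|t|$, and the ratio exceeds $2$. Otherwise $s = (uv)^ku$ with $k \ge 2$ and $t = uvu$; writing $a = |uv|$ and $b = |u|$, we have $|s| = ka + b$ and $|t| = a + b$, and since $v \ne \varepsilon$ the inequality $b < a$ holds, whence
\[
\frac{|s|}{|t|} = \frac{ka+b}{a+b} \ge \frac{2a+b}{a+b} > \frac{3}{2}.
\]
In either case $|s| > \frac{3}{2}|t|$.

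Finally I would telescope along the chain. Listing the lengths of the leading suffix-palindromes as $\ell_1 < \cdots < \ell_k$, the previous step gives $\ell_{i+1} > \frac{3}{2}\ell_i$ for each $i$, hence $\ell_k > (\tfrac{3}{2})^{k-1}\ell_1 \ge (\tfrac{3}{2})^{k-1}$. Since $\ell_k \le n$, this forces $(\tfrac{3}{2})^{k-1} < n$, i.e. $k-1 < \log_{\frac32} n$, giving the stated bound. The main obstacle is the ratio estimate in the second case of Lemma~\ref{NextLeadingPal}: the bound degenerates to exactly $\frac32$ as $|u| \to |uv|$, so the argument must combine the strict inequality $|u| < |uv|$ (guaranteed by $v \ne \varepsilon$ in the canonical decomposition) with $k \ge 2$ to keep the growth factor strictly above $\frac32$.
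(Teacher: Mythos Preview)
Your argument is correct and reaches the same core inequality as the paper---any longer leading suffix-palindrome exceeds $\frac{3}{2}$ times the length of a shorter one---followed by the same telescoping. The difference lies in how you establish the ratio bound: you invoke Lemma~\ref{NextLeadingPal} and split into the cases $s = uvu$ (where $t = u$, giving ratio ${}>2$) and $s = (uv)^ku$ with $k \ge 2$ (where $t = uvu$, giving ratio ${}>\frac{3}{2}$ via $|u|<|uv|$). The paper bypasses this structural lemma and argues directly from the definition of ``leading'': for leading suffix-palindromes $u$, $v$ with $|u| > |v|$, Lemma~\ref{PalSuffix} makes $|u|-|v|$ a period of $u$, while the leadingness of $v$ forces the minimal period $p$ of $u$ to satisfy $2p > |v|$; combining $|v| < 2p \le 2(|u|-|v|)$ yields $|u| > \frac{3}{2}|v|$ in one line. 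The paper's route is shorter and applies to \emph{any} pair of leading suffix-palindromes, not just consecutive ones; yours makes the chain structure explicit and re-uses machinery already in hand, at the cost of a case analysis.
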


\begin{proof}
Let $u$, $v$ be leading suffix-palindromes such that $|u| > |v|$. By Lemma~\ref{PalSuffix}, $|u|{-}|v|$ is a period of $u$. Let $p$ be the minimal period of $u$. Since $|v| < 2p$ and $p\le |u|{-}|v|$, we conclude $|u| > \frac{3}2|v|$, whence the result.
\end{proof}

To obtain a faster implementation of the palindromic engine, we loop through leading suffix-palindromes only. To take into account other suffix-palindromes, we gather the corresponding bits of $m$ into an additional bit array $z$ described below.

For every $i\in \overline{0,|text|}$, let $j_i$ be the maximal number $j'$ such that $text[i{+}1..j']$ is a leading subpalindrome. Since any empty subpalindrome is leading, $j_i$ is well defined. Let $p_i$ be the minimal period of $text[i{+}1..j_i]$. Denote by $d_i$ the length of the longest proper suffix-palindrome of $text[i{+}1..j_i]$ such that $text[j_i{-}d_i{+}1..j_i]$ is leading in $text$. By Lemma~\ref{NextLeadingPal}, $d_i = \min\{(j_i{-}i){-}p_i, p_i{+}((j_i{-}i) \bmod p_i)\}$. The array $z$ is maintained to support the following invariant:
$$
z[i] = m[i] \bor m[i{+}p_i] \bor \ldots \bor m[j_i{-}d_i{-}2p_i] \bor m[j_i{-}d_i{-}p_i]\mbox{ for all }i\in \overline{0,|text|}\, .
$$
\begin{proposition}
The palindromic engine can be implemented to work in $O(n\log n)$ time and $O(n)$ space.
\end{proposition}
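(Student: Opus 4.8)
The plan is to keep the overall shape of the naive \append{} but replace its loop over \emph{all} suffix-palindromes by a loop over \emph{leading} suffix-palindromes only, of which there are at most $\log_{3/2}n$ by Lemma~\ref{LogLeadingPal}, and to recover the contribution of the non-leading ones from the array $z$. Concretely, I would first extend the palindromic iterator of Proposition~\ref{Iterator} with an operation that, given the center of a leading suffix-palindrome $s=text[i{+}1..n]$, returns the center of the next leading suffix-palindrome. By Lemma~\ref{NextLeadingPal} this center is determined by the canonical decomposition of $s$, whose minimal period is read off as $p_i=\len(x)-\len(\nextPal(x))$ (Lemma~\ref{PalSuffix} guarantees this difference is the minimal period, since $\nextPal(x)$ is the longest proper suffix-palindrome), and whose next-leading length $d_i=\min\{(n{-}i){-}p_i,\,p_i+((n{-}i)\bmod p_i)\}$ is then computed in $O(1)$. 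Starting from $\maxPal$ and following these jumps enumerates all leading suffix-palindromes in $O(\log n)$ time, and the new \append{} sets $res[n]\gets\bigvee_i z[i]$ over their left endpoints $i$.

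Next I would prove correctness of this aggregation. The suffix-palindromes of $text[1..n]$, indexed by their left endpoints, split into groups, one per leading suffix-palindrome: the group of $s=(uv)^{*}u$ consists of $s$ together with the palindromes obtained by repeatedly stripping a leading period $uv$, down to the shortest one still strictly longer than the next leading palindrome of length $d_i$. Lemma~\ref{LongSubpalindromes} shows that every suffix-palindrome long enough to fall in this length range has its center among the $x$'s and $y$'s of the decomposition of $s$, so the group contains \emph{exactly} the period shifts and nothing else, while Lemma~\ref{PalSuffix} rules out any suffix-palindrome of intermediate length sitting between two consecutive groups. Hence the groups tile the whole family of suffix-palindromes, the left endpoints of the group of $s$ are precisely $i,\,i{+}p_i,\,\ldots,\,j_i{-}d_i{-}p_i$, and the invariant on $z$ gives $z[i]=\bigvee m[\,\cdot\,]$ over that group. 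ORing over groups therefore yields $\bigvee_i z[i]=1$ iff some suffix-palindrome $text[j{+}1..n]$ has $m[j]=1$, which by definition is exactly the condition $res[n]=1$.

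The main obstacle is maintaining the invariant of $z$ through $\append$. Since $m[j]$ never changes after position $j$ is appended, $z[i]$ can change only when the leading subpalindrome starting at $i{+}1$ lengthens, i.e.\ when $j_i$ becomes the new value $n$; and the indices $i$ for which $text[i{+}1..n]$ is a leading suffix-palindrome are exactly the $O(\log n)$ left endpoints enumerated above, so every other $z$-entry is stable at this step. For each affected $i$ I plan to restore $z[i]$ \emph{incrementally}: if the group headed at $i{+}1$ already existed one step earlier with the same period $p_i$, it has gained at most one new member, whose start is the single new position entering the arithmetic progression, so a single $z[i]\orgets m[\,\cdot\,]$ suffices. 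The delicate cases are those where appending the letter \emph{reconfigures} the groups — a new leading palindrome appears, or the head, period, or next-leading length $d_i$ of an existing group changes — and here one must recompute the affected $z[i]$ from the adjacent head in $O(1)$, again invoking Lemma~\ref{NextLeadingPal} to relate consecutive heads (together with a trivial initialization of the new entry $z[n]$). Proving that every append touches only $O(\log n)$ entries and spends $O(1)$ per touched entry, so that the invariant is correctly and cheaply restored, is the crux of the argument.

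The complexity then follows by accounting. Each \append{} runs the Manacher-style loop and the iterator bookkeeping in amortized $O(1)$ by Proposition~\ref{Iterator}, enumerates $O(\log n)$ leading suffix-palindromes to evaluate $res[n]$, and performs $O(\log n)$ incremental $z$-updates, giving $O(\log n)$ time per call and $O(n\log n)$ over a sequence of $n$ calls. The structures $r$, $lend$, $nodes$ of the iterator, together with the bit arrays $m$, $res$, $z$, all have size $O(n)$, so the engine uses $O(n)$ space, as claimed.
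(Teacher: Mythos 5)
Your architecture coincides with the paper's: enumerate only the leading suffix-palindromes via Lemma~\ref{NextLeadingPal} (with $p=\len(x)-\len(\nextPal(x))$ and the same formula for $d$), group the remaining suffix-palindromes into arithmetic progressions of left endpoints attached to each cubic leading one (Lemma~\ref{LongSubpalindromes}), accumulate each group's bits of $m$ in $z$, and OR the $O(\log n)$ values $z[i]$ into $res[n]$. The enumeration, the tiling argument, and the final accounting are all sound. But the step you yourself call ``the crux'' --- that each call restores the invariant of $z$ in $O(1)$ per touched entry --- is precisely what you do not prove, and your fallback for the hard case is not viable. You cover the case where the group persists with the same period and gains one member; for the ``delicate cases'' you propose to ``recompute the affected $z[i]$ from the adjacent head in $O(1)$''. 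No such $O(1)$ recomputation exists: $z[i]$ is an OR over a group that may have $\Theta(n/\log n)$ members, and no adjacent head stores that information. If a genuine recomputation were ever required, the $O(\log n)$ per-call bound would collapse, so the argument cannot be left in this form.

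The gap is closed by a cubic/non-cubic dichotomy. If the leading suffix-palindrome $w_i=text[i{+}1..n]$ is \emph{not} cubic, then by Lemma~\ref{LongSubpalindromes} its group is a singleton ($n-d-p=i$), so $z[i]$ is simply \emph{reset} to $m[i]$; this costs $O(1)$ and also disposes of any stale value left in $z[i]$ by an unrelated earlier group (cf.\ the last call in Example~\ref{ex:nlogn}) --- a scenario your incremental scheme does not address at all. If $w_i$ \emph{is} cubic with canonical decomposition $(uv)^*u$, one must prove the combinatorial fact that $text[i{+}1..n{-}|uv|]$ was itself the maximal leading subpalindrome starting at position $i{+}1$ before the current call, with the same minimal period $|uv|$ (this follows from Lemmas~\ref{PalPeriodicity} and~\ref{LongSubpalindromes}, together with the observation that no leading subpalindrome $text[i{+}1..j']$ with $n{-}|uv|<j'<n$ can exist, since Lemma~\ref{PalSuffix} would give $w_i$ a period smaller than $|uv|$). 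Only then is the previous value of $z[i]$ guaranteed to be the required OR-chain up to its penultimate term, so that the single assignment $z[i]\orgets m[n{-}d{-}p]$ restores the invariant. Without this lemma the claim ``it has gained at most one new member'' is unsupported, so the proposal as written is an outline of the correct algorithm rather than a proof of the proposition.
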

\begin{proof}
Consider the following implementation of the function $\append$. An instance of its work is given below in Example~\ref{ex:nlogn}.
\begin{algorithmic}[1]
{\algtext*{EndFunction}%
\Function{$\append$}{$a, b$}
    \State $\append_i(a);\;n\gets |text|;\; res[n] \gets 0;\; m[n] \gets b;\; z[n] \gets b;\; d\gets 0;$ \label{lst:line:setZelem}
    \For{$(x \gets \maxPal;\;x \neq n{+}\frac{1}2;\;x \gets n{-}(d{-}1)/2)$} \Comment{for leading suf-pal} \label{lst:line:Head}
        \State $p \gets \len(x) - \len(\nextPal(x));$ \Comment{min period of processed suf-pal} \label{lst:line:minimalPeriod}
        \State $d \gets \min(p {+} (\len(x) \bmod p), \len(x){-}p);$ \Comment{length of next leading s-pal}\label{lst:line:uvuLength}
        \NIf{$3p > \len(x)$} \Comment{processed suf-pal is not cubic} \label{lst:line:resetZcond}
            \State $z[n{-}\len(x)] \gets m[n{-}\len(x)];$ \label{lst:line:resetZ}
        \State\textbf{else} $z[n{-}\len(x)] \orgets m[n{-}d{-}p];$ \Comment{processed suf-pal is cubic} \label{lst:line:recalcZ}
    		\State $res[n] \orgets z[n{-}\len(x)];$ \label{lst:line:modifyRes}
    \EndFor
\EndFunction} 
\end{algorithmic}
Let $w_0,\ldots, w_k$ be all leading suffix-palindromes of $text$ and $|w_0|>\ldots>|w_k|$. We show by induction that the values taken by $x$ are the centers of $w_0,\ldots, w_k$ (in this order). In the first iteration $x=\maxPal$ is the center of $w_0$. Let $x$ be the center of $w_i$. The minimal period $p$ of $w_i$ is calculated in line~\ref{lst:line:minimalPeriod} according to Lemmas~\ref{PalPeriodicity} and~\ref{PalSuffix}. By Lemma~\ref{NextLeadingPal}, the value assigned to $d$ in line~\ref{lst:line:uvuLength} is $|w_{i+1}|$. Thus, the third operand in line~\ref{lst:line:Head} sets $x$ to the center of $w_{i+1}$ for the next iteration.

Let $x$ and $(uv)^*u$ be, respectively, the center and the canonical decomposition of $w_i$. Denote by $w$ any suffix-palindrome such that $|w_i|\ge|w|>|w_{i+1}|$. By Lemma~\ref{LongSubpalindromes}, $w = (uv)^*u$. If the invariant of $z$ is preserved, the assignment in line~\ref{lst:line:modifyRes} is equivalent to the sequence of assignments $res[n] \orgets m[n{-}|w|]$ for all such $w$. Since $i$ runs from $0$ to $k$, finally one gets $res[n]\orgets m[n{-}|w|]$ for all suffix-palindromes $w$, thus providing that the engine works correctly. To finish the proof, let us show that our implementation preserves the invariant on-the-fly, setting the correct value of $z[n{-}|w_i|]$ in lines~\ref{lst:line:resetZ},~\ref{lst:line:recalcZ} just before it is used in line~\ref{lst:line:modifyRes}.

As in the pseudocode presented above, denote by $n$ the length of $text$ with the letter $c$ appended. For any $j\in \overline{0,n{-}1}$, the bit $z[j]$ is changed iff $text[j{+}1..n]$ is a leading suffix-palindrome. Assume that $w_i = text[j{+}1..n]$ is a leading suffix-palindrome and $x$ is its center. If $w_i$ is not cubic, line~\ref{lst:line:resetZ} gives the correct value of $z[j]$, because $n-d-p=j$. Suppose $w_i$ is cubic. Let $(uv)^*u$ be a canonical decomposition of $w_i$. Then $w' = text[i{+}1..n{-}|vu|]$ is a leading subpalindrome. Indeed, $w' = (uv)^*u$ and $|w'| \ge |uvuvu|$. For some $i'\le i$, suppose that $text[i'{+}1..n{-}|uv|]$ is a leading subpalindrome, $p$ is its minimal period, and $2p < |w'|$; then since $p \ge |uv|$, we have, by Lemmas~\ref{PalPeriodicity} and~\ref{LongSubpalindromes}, that either $2p > |w'|$ or $|uv|$ divides $p$. Hence $text[i'{+}1..n{-}|uv|] = (uv)^*u$. Thus $i' = i$ because $text[i{+}1..n]$ is leading. Since $w'$ is leading, we restore the invariant for $z[n{-}|w_i|]$ in line~\ref{lst:line:recalcZ}.

Since the number of iterations of the \textbf{for} cycle equals the number of leading suffix-palindromes of $text$, it is $O(\log n)$ by Lemma~\ref{LogLeadingPal}. This gives us the required time bound; the space bound is obvious.
\end{proof}

\begin{example} \label{ex:nlogn}
Let $text = ababab$. For $i = 0,\ldots,6$, denote by $j_i$ the maximal number $j'$ such that $text[i{+}1..j']$ is a leading subpalindrome. Let $(u_iv_i)^*u_i$ be a canonical decomposition of $text[i{+}1..j_i]$. We have $z[i]=m[i]$ for all $i$. The following table describes $text[i{+}1..j_i]$.
$$
\arraycolsep=3pt
\begin{array}{r||c|c|c|c|c|c|c|}
i & 0 & 1 & 2 & 3 & 4 & 5 & 6 \\
\hline
text[i{+}1..j_i] & ababa & babab & aba & bab & a & b & \varepsilon \\
\hline
u_i, v_i & a,b & b,a & a,b & b,a & \varepsilon, a & \varepsilon, b & \varepsilon, \varepsilon \\
\hline
\end{array}
$$
Assume that now we call $\append(a,m[7])$, using the $O(n\log n)$ implementation above (for simplicity, we suppose that the array $m$ is known in advance). In line 2, we get $text = abababa$, $res[7]=0$, $z[7]=m[7]$. The leading suffix-palindromes of $text$ are $w_0 = abababa$, $w_1 = aba$, $w_2 = a$, $w_3 = \varepsilon$. Then the \textbf{for} loop passes three iterations: 1) $p = 2$, $d = 3 = |aba|$; 2) $p = 2$, $d = 1 = |a|$; 3) $p = 1$, $d = 0 = |\varepsilon|$. On the first iteration, $z[0]\orgets m[2]$ is assigned. Thus, $z[0]$ takes care of the non-leading suffix-palindrome $ababa$. On the next two iterations, the condition in line 6 is true, so we (re)assign $z[4]\gets m[4]$ and $z[6]\gets m[6]$. Finally, we get
$$
res[7]=z[0]\bor z[4]\bor z[6]=m[0]\bor m[2]\bor m[4]\bor m[6].
$$
In order to demonstrate other features of this implementation of palindromic engine, we consider a few further calls to $\append$.

Suppose the next call is $\append(b,m[8])$, giving us $text = abababab$. This call is much alike the previous one, with three iterations of the \textbf{for} loop, corresponding to the suffix-palindromes $w_0 = bababab$, $w_1 = bab$, and $w_2 = b$, and the only non-trivial assignment $z[1]\orgets m[3]$.

Now let us call $\append(a,m[9])$. We again have three iterations, for $w_0 = text = ababababa$, $w_1 = aba$, $w_2 = a$. In the first iteration, $z[0]\orgets m[4]$ is assigned; this gives us $z[0]=m[0]\bor m[2]\bor m[4]$. Thus, $z[0]$ takes care of both non-leading suffix-palindromes, following  $w_0$.

Finally, assume that after several steps we have $text=ababababacabababab$ and now call $\append(a,m[19])$. On all omitted calls, no suffix-palindromes started in the first position of text, so the bit $z[0]$ was not in use and remained unchanged. On the current call, $z[0]$ is used again, but $w_0=text$ is a non-cubic palindrome; hence, $z[0]$ is reset to $m[0]$ in line 7.
\end{example}

\section{Linear Algorithm}
\label{linearAlgorithm}

Consider the \emph{word-RAM} model with $\beta{+}1$ bits in the machine word, where the bits are numbered starting with 0 (the least significant bit). A standard assumption is $\beta>\log|text|$. For a bit array $s[0..n]$ and integers $i_0$, $i_1$ such that $0\le i_1 - i_0 \le \beta$, we write $x\gets s[\rrange{i_0..i_1}]$ to get the number $x$ whose $j$th bit, for any $j\in\overline{0,\beta}$, equals $s[i_0{+}j]$ if $0\le i_0{+}j \le \min\{n,i_1\}$ and $0$ otherwise. Similarly, $x \gets s[\lrange{i_0..i_1}]$ defines $x$ with a $j$th bit equal to $s[i_1{-}j]$ if $\max\{0,i_0\}\le i_1{-}j \le n$ and to $0$ otherwise. We write $s[\lrange{i_0..i_1}] \gets x$ and $s[\rrange{i_0..i_1}] \gets x$ for the inverse operations. A bit array is called \emph{forward} [\emph{backward}] if each read/write operation for $s[\rrange{i_0..i_1}]$ [resp. $s[\lrange{i_0..i_1}]$] takes $O(1)$ time. Forward [backward] arrays can be implemented on arrays of machine words with the aid of bitwise shifts.

Processing a string of length $n$, we can read/write a group of $\log n$ elements of forward or backward array in a constant time. In this section we speed up palindromic engine using bitwise operations on groups of $\log n$ bits. This sort of optimization is often referred to as four Russians' trick (see~\cite{FourRussians}). Note that there is a simpler algorithm recognizing $\mathrm{Pal}^k$ in $O(kn\log n)$ time, but it cannot be sped up in this fashion.

In the sequel $n$ denotes $|text|$. As above, our palindromic engine contains a palindromic iterator; by Proposition~\ref{Iterator}, all computations inside the iterator take $O(n)$ total time, so we need no speed up for that.

In the implementation described below, the array $m[0..n]$ and the auxiliary array $z[0..n]$ are backward, while slightly extended array $res[0..n{+}\beta]$ is forward.

\subsubsection{5.1. Idea of the algorithm}\label{sectIdea}

We say that a call to $\append$ is \emph{predictable} if it retains the value of $\maxPal$ (or, in other words, extends the longest suffix-palindrome). For a predictable call, we know from symmetry which suffix-palindromes will be extended. This crucial observation allows us to fill $res[n..n{+}\beta]$ in advance so that in the next $\beta$ calls we need only few changes of $res$ provided that these calls are predictable.

Let $text=vs$ at some point, where $s$ is the longest suffix-palindrome. The number of subsequent calls preserving $\maxPal$ is at most $|v|=n{-}\len(\maxPal)$: this is the case if we add $\lvec{v}$. Consider those calls. Let $c_0<\ldots <c_k$ be the list of centers of all suffix-palindromes of $text$. Let $i\in \overline{1,k}$. After some predictable call $c_i$ can vanish from this list. Let $p_i$ be the number of predictable calls that retain $c_i$ on the list. Then $p_i = \rad(\refl(c_i, \maxPal)) - \rad(c_i)$ (in Fig.~\ref{fig:predic} $p_1 = 5{-}3 = 2$).

\unitlength=1mm
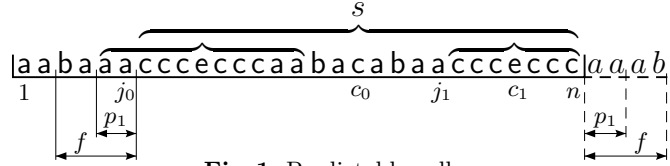
\begin{figure}[htb]
\centering
\begin{picture}(86,12)(0,4)
\gasset{Nframe=n,Nw=5,Nh=3,AHnb=0,linewidth=0.25}
\node(text)(38,12){\large\sf a\,a\,b\,a\,a\,a\,c\,c\,c\,e\,c\,c\,c\,a\,a\,b\,a\,c\,a\,b\,a\,a\,c\,c\,c\,e\,c\,c\,c}
\node(pr)(81.5,12){\large$a\,a\,a\,b$}
\node(1)(1.5,8){$1$}
\node(j0)(15,8){$j_0$}
\node(c0)(46.3,8){$c_0$}
\node(j1)(57,8){$j_1$}
\node(c1)(67.2,8){$c_1$}
\node(n)(74.5,8){$n$}
\drawline(0,13)(0,10)(76,10)(76,13)
\drawline[dash={1.5 1}{0}](76,10)(87,10)
\put(66.6,11){\makebox(0,0)[cb]{$\overbrace{\phantom{aaaaaaaaaa}}$}}
\put(25.1,11){\makebox(0,0)[cb]{$\overbrace{\phantom{aaaaaaaaaaamaaa}}$}}
\node(s)(45.9,19){\large$s$}
\put(45.9,13.5){\makebox(0,0)[cb]{$\overbrace{\phantom{aaaaaaaaaaamnaaaaaaaaaaaaaaaaaaaa}}$}}
\gasset{linewidth=0.1}
\drawline(16.4,11)(16.4,-1)
\drawline(5.7,11)(5.7,-1)
\drawline(11.1,11)(11.1,2)
\drawline[dash={1.5 1}{0}](76,11)(76,-1)
\drawline[dash={1.5 1}{0}](86.9,11)(86.9,-1)
\drawline[dash={1.5 1}{0}](81.5,11)(81.5,2)
\gasset{AHnb=1,ATnb=1,AHlength=1.2}
\node(p)(13.7,4.3){$p_1$}
\node(f)(9,1.2){$f$}
\drawline(5.7,-0.5)(16.4,-0.5)
\drawline(11.1,2.5)(16.4,2.5)
\node(p1)(78.8,4.3){$p_1$}
\node(f1)(83.5,1.2){$f$}
\drawline(76,-0.5)(86.9,-0.5)
\drawline(76,2.5)(81.5,2.5)
\end{picture}
\caption{\small Predictable calls.}
\label{fig:predic}
\end{figure}

Let $j_i=n{-}\len(c_i)$. If the operation $res[\rrange{n..n{+}\beta}] \orgets m[\lrange{j_i{-}p_i..j_i}]$ is performed for some $i\in\overline{1,k{-}1}$, we do not need to consider the suffix-palindrome with the center $c_i$ during the next $\beta$ predictable calls. Similarly, if $res[\rrange{n..n{+}\beta}] \orgets m[\lrange{j_0{-}\beta..j_0}] \bor (m[\lrange{j_k{-}p_k..j_k{-}1}] \shl 1)$ is performed, we do not consider the centers $c_0$ and $c_k$ (a shift appears because the empty suffix-palindrome is ignored). The algorithm is roughly as follows. When the assignments above are performed, each of the next $\beta$ predictable calls just adds two suffix-palindromes (one-letter and empty) and performs the corresponding assignments for them. When an unpredictable call or the $(\beta{+}1)$st predictable call occurs, we make new assignments in the current position and use array $z$ to reduce the number of suffix-palindromes to loop through. Let us consider details.

\subsubsection{5.2. Algorithm}\label{sectAlg}

We add to the engine an integer variable $f$ such that $0\le f\le \min\{\beta, n - \len(\maxPal)\}$. The value of $res[\rrange{n..n{+}f}]$ is called the \emph{prediction}. Let us describe it. The centers $c_i$ and the numbers $p_i$ are defined in Sect.~5.1. Let $\mathrm{pr} \colon \{c_0,\ldots, c_k\}\to\mathbb{N}_0$ be the mapping defined by $\mathrm{pr}(c_0)=f$ and $\mathrm{pr}(c_i)=\min\{p_i,f\}$ for $i>0$. Obviously, $\mathrm{pr}(c_i)$ is computable in $O(1)$ time. According to Sect.~5.1, the following value, called \emph{$f$-prediction}, takes care of the palindromes with the centers $c_0,\ldots,c_k$ during all the time when they are suffix-palindromes:
$$
m[\lrange{j_0{-}\mathrm{pr}(c_0)..j_0}] \bor \cdots \bor m[\lrange{j_{k{-}1}{-}\mathrm{pr}(c_{k{-}1})..j_{k{-}1}}] \bor (m[\lrange{j_k{-}\mathrm{pr}(c_k)..j_k{-}1}] \shl 1) .
$$
The prediction calculated by our algorithm will sometimes deviate from the $f$-prediction, but in a way that guarantees condition~3 of the definition of palindromic engine. Now we describe the nature of this deviation.

Let $c\in \mathcal{C}$ and $c>n{+}\frac{1}2$. Denote $c' = \refl(c, \maxPal)$. Suppose $c'>0$ and $\lceil c\rceil - \rad(c') \le n{+}1$ (see Fig.~\ref{fig:addpredic}). Let $r$ be a positive integer such that $r \le \rad(c'){+}1$ and $\lceil c\rceil - r \le n$. The values $c$ and $r$ are chosen so that after a number of predictable calls $text$ will contain a suffix-palindrome with the center $c$ and the radius $r{-}1$. Then $res[\lceil c\rceil{+}r{-}1] = 1$ if $m[\lceil c\rceil{-}r] = 1$. We call the value $g = m[\lceil c\rceil{-}r] \shl (\lceil c\rceil{+}r{-}1{-}n)]$ an \emph{additional prediction}. The assignment $res[n..n{+}f] \orgets g$ performs disjunction of the bits $res[\lceil c\rceil{+}r{-}1]$ and $m[\lceil c\rceil - r]$ (we suppose $\lceil c\rceil{+}r{-}1 \le n{+}f$). Setting this bit to $1$ is not harmful: if there will be no unpredictable calls before the position $\lceil c\rceil{+}r{-}1$, then this bit will be set to $1$ when updating the $f$-prediction on the $\lfloor c\rfloor$th iteration. Additional predictions appear as a byproduct of the linear-time implementation of the engine.

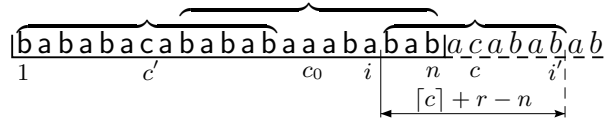
\begin{figure}[htb]
\centering
\begin{picture}(78,10)(0,4)
\gasset{Nframe=n,Nw=5,Nh=3,AHnb=0,linewidth=0.25}
\node(text)(28.8,12){\large\sf b\,a\,b\,a\,b\,a\,c\,a\,b\,a\,b\,a\,b\,a\,a\,a\,b\,a\,b\,a\,b}
\node(pr)(68.1,12){\large$a\,c\,a\,b\,a\,b\,a\,b$}
\node(1)(1.5,8){$1$}
\node(cc)(18.5,8.3){$c'$}
\node(c0)(40,7.9){$c_0$}
\node(i)(47.3,8.2){$i$}
\node(n)(56,8){$n$}
\node(c)(61.5,8){$c$}
\node(ii)(72.3,8.3){$i'$}
\drawline(0,13)(0,10)(57.5,10)(57.5,13)
\drawline[dash={1.5 1}{0}](57.5,10)(78.6,10)
\put(61.5,11.5){\makebox(0,0)[cb]{$\overbrace{\phantom{aaaaaaaaaaaaaa}}$}}
\put(17.9,11.5){\makebox(0,0)[cb]{$\overbrace{\phantom{aaaaaaaaaaaaaaaaaaaa}}$}}
\put(39.4,13.7){\makebox(0,0)[cb]{$\overbrace{\phantom{aaaamnaaaaaaaaaaaaa}}$}}
\gasset{linewidth=0.1}
\drawline(49,11)(49,2)
\drawline[dash={1.5 1}{0}](73.5,11)(73.5,2)
\gasset{AHnb=1,ATnb=1,AHlength=1.2}
\node(r)(61.3,4.3){$\lceil c\rceil+r-n$}
\drawline(49,2.5)(73.5,2.5)
\end{picture}
\caption{\small Additional prediction; $c_0=\maxPal$, $c'=\refl(c,c_0)$, $i=c-r$, $i'=c+r-1$.}
\label{fig:addpredic}
\end{figure}

We define the prediction through the \emph{main invariant of palindromic engine:} $res[\rrange{n..n{+}f}]$ equals the bitwise ``or'' of the $f$-prediction and some additional predictions. Such a definition guarantees that $res[n] = 1$ iff $m[j] = 1$ and $text[j{+}1..n] \in \mathrm{Pal}$ for some $j$, $0\le j<n$. Thus, the goal of $\append(a,b)$ is to preserve the main invariant. Our implementation of $\append(a,b)$ consists of three steps:
\begin{enumerate}
\item call $\append_i(a)$ to extend $text$ (and increment $n$); then assign $b$ to $m[n]$;
\item if $\maxPal$ remains the same and $f > 0$, decrement $f$ and perform $res[\rrange{n..n{+}f}] \orgets m[\lrange{n{-}1{-}\mathrm{pr}(n)..n{-}1}]  \bor  (m[\lrange{n{-}\mathrm{pr}(n{+}\frac{1}2)..n{-}1}] \shl 1)$;
\item otherwise, assign $f \gets \min\{\beta, n - \len(\maxPal)\}$ and recalculate the prediction $res[\rrange{n..n{+}f}]$.
\end{enumerate}
The operations of step~2 correspond to a predictable call and obviously preserve the main invariant. In the sequel we only consider step~3; step~1 is supposed to be performed: $a$ is appended to $text$, $n$ is incremented, and $m[n] = b$.

\subsubsection{5.3. Prediction recalculation} \label{sectPredRecalc}

Recall that $c_0<\ldots<c_k$ are the centers of suf\-fix-palindromes, $j_i=n{-}\len(c_i)$. First, clear the prediction: $res[\rrange{n..n{+}f}] \gets 0$. To get the $f$-prediction, it suffices to assign $res[\rrange{n..n{+}f}] \orgets m[\lrange{j_i{-}\mathrm{pr}(c_i)..j_i}]$ for $i=0,\ldots,k{-}1$ and $res[\rrange{n..n{+}f}] \orgets m[\lrange{j_k{-}\mathrm{pr}(c_k)..j_k{-}1}] \shl 1$. But our algorithm processes leading suffix-palindromes only, and the bits of $m$ that correspond to non-leading suffix-palindromes are accumulated in a certain fast accessible form in the array $z$. For simplicity, we process the empty suffix separately.

Let $i_0<\ldots<i_h$ be integers such that $c_{i_0}<\ldots<c_{i_h}$ are the centers of all leading suffix-palindromes, $r\in \overline{0,h{-}1}$ and $s = i_{r+1} - i_r - 1> 0$. Denote by $w$ the suffix-palindrome centered at $c_{i_r}$. Let $(uv)^*u$ be the canonical decomposition of $w$. It follows from Lemma~\ref{LongSubpalindromes} that $c_{i_r+1}, \ldots, c_{i_r+s}$ are the centers of $(uv)^{s+1}u, \ldots, (uv)^2u$, $c_{i_r+s+1} = c_{i_{r+1}}$ is the center of $uvu$, and $w = (uv)^{s+2}u$. Then $w$ is cubic. The converse is also true, i.e., if $w=(uv)^{s+2}u$ is a cubic suffix-palindrome, then $(uv)^{s+1}u, \ldots, (uv)^2u$ are non-leading suffix-palindromes, and $uvu$ is a leading suffix-palindrome. So, non-leading suffix-palindromes are grouped into series following cubic leading suffix-palindromes.

Recall that the palindromic iterator allows one, in $O(1)$ time, to 1) get $c_{i+1}$ from $c_i$; 2) find the minimal period of a suffix-palindrome; 3) using Lemma~\ref{NextLeadingPal}, get $c_{i_{r+1}}$ from $c_{i_r}$. The prediction recalculation involves the following steps:
\begin{enumerate}
\item accumulate some blocks of bits from $m$ into $z$ (see below); \label{stepRecalcZ}
\item for all $r\in\overline{0,h{-}1}$, assign $res[\rrange{n..n{+}f}] \orgets m[\lrange{j_{i_r}{-}\mathrm{pr}(c_{i_r})..j_{i_r}}]$;\label{stepLeading}
\item for all $r\in\overline{1,h{-}1}$, if $c_{i_r}$ is the center of a cubic suffix-palindrome and $\len(c_{i_r}) \le 2\beta$, assign $res[\rrange{n..n{+}f}] \orgets m[\lrange{j_{i_r+s}{-}\mathrm{pr}(c_{i_r+s})..j_{i_r+s}}]$ for $s = 1,2,\ldots, i_{r+1}{-}i_r{-}1$;\label{stepNonAccelerate}
\item for all $r\in\overline{0,h{-}1}$, if $c_{i_r}$ is the center of a cubic suffix-palindrome and either $\len(c_{i_r}) > 2\beta$ or $c_{i_r} = c_0$, perform the assignments of step~\ref{stepNonAccelerate} in $O(1)$ time with the aid of the array $z$.\label{stepAccelerateWithZ}
\end{enumerate}

\hyphenation{se-pa-rate-ly}
Thus, ``short'' and ``long'' non-leading suffix-palindromes are processed separately (resp., on step~\ref{stepNonAccelerate} and step~\ref{stepAccelerateWithZ}). Steps~\ref{stepRecalcZ} and~\ref{stepAccelerateWithZ} require further explanation.

\subsubsection{5.4. Content of $z$ and prediction of long suffix-palindromes}

Let $w$ be a cubic leading suffix-palindrome such that $|w| > 2\beta$ or $|w| = \len(\maxPal)$. Suppose $(uv)^*u$ is the canonical decomposition of $w$. Then $p=|uv|$ is the minimal period of $w$. Denote the centers of suffix-palindromes $w, \ldots, uvu, u$ by $c_1, c_2, \ldots, c_k$ respectively. Let us describe the behavior of those suffix-palindromes in predictable calls.

Let $t$ be the longest suffix of $text$ with the period $p$ ($t$ is not necessarily a palindrome). Then $|t| = |w| + \rad(\refl(c_k, c_1)) - \rad(c_k)$ is computable in $O(1)$ time. Since $w$ is leading and cubic, $|t| < |w| + p$. In a predictable call to $\append$, the suffix $t$ extends if $text[n] = text[n{-}p]$, and breaks otherwise. Suppose $t$ extended to $ta$. The suffix-palindromes centered at $c_2, \ldots, c_{k}$ also extended, while $w$ extends iff $|w| < |t|$. Thus, in a series of such extensions of $t$ the set of centers loses its first element during each $p$ steps. Suppose $t$ broke. Now the palindromes centered at $c_2,\ldots, c_k$ broke, while $w$ can extend provided that $w=t$.

\begin{example}\label{example7}
Let $text = baaaabaaa$. Then $\maxPal=6$; $w = aaa$ is a leading cubic suffix-palindrome; $w = (uv)^*u$ for $u = \varepsilon$ and $v = a$; $t=w$. Suffix-palindromes $aaa, aa, a, \varepsilon$ have the centers $c_1 = 8, c_2 = 8.5, c_3 = 9, c_4 = 9.5$ respectively. After the predictable call to $\append$, $text = baaaabaaaa$, $t$ is extended, and $w$ (with the center $c_1$) broke. After the second predictable call, $text = baaaabaaaab$, $t$ is broken, and only $c_2$ remains the center of a suffix-palindrome.
\end{example}

Consider the first $f$ predictable calls. Let $q$ be the maximal number such that the suffix $t$ of period $p$ extends over the first $q$ of these calls. Since $w$ is ``long'', i.e., $|w| > 2\beta$ or $w$ is the longest suffix-palindrome, and $f \le \beta$, one can be obtain $q$ in $O(1)$ time: $q = \min\{f, \rad(\refl(c_k, \maxPal)) - \rad(c_k)\}$. If $q<f$, the $(q{+}1)$st predictable call breaks the suffix of period $p$; as a result, at most one palindrome $w'=(uv)^*u$ extends to a suffix-palindrome at this moment (cf. Example~\ref{example7}). The length of $w'$ in the initial text equals $|t|{-}q$, implying $(|t|{-}q{-}|u|)\bmod p =0$. To process $w'$, we perform $res[\rrange{n..n{+}f}] \orgets m[\lrange{j{-}\mathrm{pr}(c_i)..j}]$ for $j=n{-}|w'|$, $c_i=n{-}(|w'|{-}1)/2$. To process other palindromes $(uv)^*u$, we consider $z$.

Denote $j_t = n-|t|$, $j'_t = j_t + p - 1$, and $j_w = n-|w|$, see Fig.~\ref{fig:betaacc} a,b. We store the information about the series of palindromes $(uv)^*u$ in the block $z[j_t...j'_t]$ of length $p=|uv|$. For any $j\ge 0$, $i_j = j'_t - ((j + j'_t - j_w) \bmod p)$. Thus, $i_0 = j_w$, $i_1 = j_w{-}1$ if $j_w\neq j_t$, and $i_1 = j'_t$ otherwise. Hence while $j$ increases, $i_j$ cyclically shifts left inside the range $\overline{j_t,j'_t}$. We fill the block $z[j_t..j'_t]$ such that each of its bits is responsible for the whole series of suffix-palindromes with the period~$p$.
\begin{equation} \label{eq:betaacc}
\forall j\in \overline{0,\beta}\colon z[i_j] = m[i_j] \bor m[i_j{+}p] \bor \ldots \bor m[i_j{+}lp] \mbox{ for }l = \lfloor (n {-} i_j) / p \rfloor\enspace.
\end{equation}

\begin{figure}[htb]
\centering
\begin{picture}(88,18)(0,-5)
\gasset{Nframe=n,Nw=5,Nh=3,AHnb=0,linewidth=0.25}
\node(text)(42,12){\large\sf x\,x\,x\,x\,b\,a\,c\,d\,c\,a\,b\,a\,c\,d\,c\,a\,b\,a\,c\,d\,c\,a\,b\,a\,c\,d\,c}
\node(jt)(15.2,8.1){$j_t$}
\node(jw)(20.8,8.1){$j_{\!w}$}
\node(jtt)(28.9,8.3){$j'_t$}
\node(c)(50,8){$c_1$}
\node(n)(76.3,8){$n$}
\put(1,11){\makebox(0,0)[lc]{\ldots}}
\drawline(0,10)(77.5,10)(77.5,13)
\node(s)(49.6,16.5){\large$w$}
\put(49.6,11.7){\makebox(0,0)[cb]{$\overbrace{\phantom{aaaaannnaaaaaaaaaaaaaaaaaaaaaaa}}$}}
\gasset{linewidth=0.1}
\drawline(13.9,11)(13.9,-1)
\drawline(29.8,11)(29.8,-1)
\drawline(45.7,11)(45.7,-1)
\drawline(61.6,11)(61.6,-1)
\drawline(77.5,11)(77.5,-1)
\drawline(21.9,11)(21.9,2.5)
\drawline(24.5,11)(24.5,2.5)
\drawline(37.8,11)(37.8,2.5)
\drawline(40.4,11)(40.4,2.5)
\drawline(53.7,11)(53.7,2.5)
\drawline(56.3,11)(56.3,2.5)
\drawline(69.6,11)(69.6,2.5)
\drawline(72.2,11)(72.2,2.5)
\gasset{AHnb=1,ATnb=1,AHlength=1.2}
\drawline(13.9,-0.5)(29.8,-0.5)
\drawline(29.8,-0.5)(45.7,-0.5)
\drawline(45.7,-0.5)(61.6,-0.5)
\drawline(61.6,-0.5)(77.5,-0.5)
\drawline(13.9,2.8)(21.9,2.8)
\drawline(24.5,2.8)(29.8,2.8)
\drawline(29.8,2.8)(37.8,2.8)
\drawline(40.4,2.8)(45.7,2.8)
\drawline(45.7,2.8)(53.7,2.8)
\drawline(56.3,2.8)(61.6,2.8)
\drawline(61.6,2.8)(69.6,2.8)
\drawline(72.2,2.8)(77.5,2.8)
\multiput(17.9,3.1)(15.9,0){4}{\makebox(0,0)[cb]{$r_{\!w}^{\!0}$}}
\multiput(27.2,3.1)(15.9,0){4}{\makebox(0,0)[cb]{$r_{\!w}^{\!1}$}}
\multiput(21.8,-0.2)(15.9,0){4}{\makebox(0,0)[cb]{$p$}}
\put(0,0){\makebox(0,0)[rb]{\small a}}
\end{picture}
\begin{picture}(88,19)(0,-1)
\gasset{Nframe=n,Nw=5,Nh=3,AHnb=0,linewidth=0.25}
\node(text)(42,12){\large\sf x\,x\,c\,a\,b\,a\,c\,d\,c\,a\,b\,a\,c\,d\,c\,a\,b\,a\,c\,d\,c\,a\,b\,a\,c\,d\,c}
\node(jt)(10.1,8.1){$j_t$}
\node(jw)(20.8,8.1){$j_{\!w}$}
\node(jtt)(23.6,8.3){$j'_t$}
\node(c)(50,8){$c_1$}
\node(n)(76.3,8){$n$}
\put(1,11){\makebox(0,0)[lc]{\ldots}}
\put(81,11){\makebox(0,0)[lc]{\ldots}}
\drawline(0,10)(77.5,10)(77.5,13)
\drawline[dash={1.5 1}{0}](77.5,10)(88,10)
\node(s)(49.6,16.5){\large$w$}
\put(49.6,11.7){\makebox(0,0)[cb]{$\overbrace{\phantom{aaaaannnaaaaaaaaaaaaaaaaaaaaaaa}}$}}
\gasset{linewidth=0.1}
\drawline(8.7,11)(8.7,-1)
\drawline(24.6,11)(24.6,-1)
\drawline(40.5,11)(40.5,-1)
\drawline(56.4,11)(56.4,-1)
\drawline(72.3,11)(72.3,-1)
\drawline(88.2,11)(88.2,-1)
\drawline(21.9,11)(21.9,2.5)
\drawline(37.8,11)(37.8,2.5)
\drawline(53.7,11)(53.7,2.5)
\drawline(69.6,11)(69.6,2.5)
\drawline(85.5,11)(85.5,2.5)
\gasset{AHnb=1,ATnb=1,AHlength=1.2}
\drawline(8.7,-0.5)(24.6,-0.5)
\drawline(24.6,-0.5)(40.5,-0.5)
\drawline(40.5,-0.5)(56.4,-0.5)
\drawline(56.4,-0.5)(72.3,-0.5)
\drawline(72.3,-0.5)(88.2,-0.5)
\drawline(8.7,2.8)(21.9,2.8)
\drawline(21.9,2.8)(24.6,2.8)
\drawline(24.6,2.8)(37.8,2.8)
\drawline(37.8,2.8)(40.5,2.8)
\drawline(40.5,2.8)(53.7,2.8)
\drawline(53.7,2.8)(56.4,2.8)
\drawline(56.4,2.8)(69.6,2.8)
\drawline(69.6,2.8)(72.3,2.8)
\drawline(72.3,2.8)(85.5,2.8)
\drawline(85.5,2.8)(88.2,2.8)
\multiput(15.3,3.1)(15.9,0){5}{\makebox(0,0)[cb]{$r_{\!w}^{\!0}$}}
\multiput(23.3,3.1)(15.9,0){5}{\makebox(0,0)[cb]{$r_{\!w}^{\!1}$}}
\multiput(16.7,-0.2)(15.9,0){5}{\makebox(0,0)[cb]{$p$}}
\put(0,0){\makebox(0,0)[rb]{\small b}}
\end{picture}
\caption{\small Series of palindromes with a common period $p$. The cases presented are (a) $p>\beta{+}1$ $(=r_w^0 + r_w^1 = 5)$ and (b) $\beta{+}1\ge p$ $(=r_w^0 + r_w^1 = 6)$.}
\label{fig:betaacc}
\end{figure}
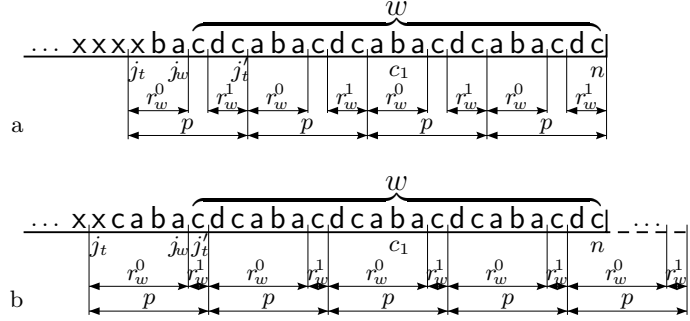

Let $r_w^0 = \min\{\beta, j_w {-} j_t\}+1$, $r_w^1 = \min\{\beta {+} 1 {-} r_w^0, j'_t {-} j_w\}$. Clearly, $r_w^0 + r_w^1 = \min\{\beta{+}1, p\}$. Hence, $i_j$ in (\ref{eq:betaacc}) runs through the ranges $[j_w{-}r_w^0{+}1..j_w]$ and $[j'_t{-}r_w^1{+}1..j'_t]$. Let $d = (1\shl (q{+}1))-1$; thus, $d$ is the bit mask consisting of $q{+}1$ ones. Suppose $\beta{+}1 < p$ (see~Fig.~\ref{fig:betaacc},a). To recalculate the prediction, it suffices to assign $res[\rrange{n..n{+}q}] \orgets d \band (z[\lrange{j_w{-}r_w^0{+}1..j_w}] \bor (z[\lrange{j'_t{-}r_w^1{+}1..j'_t}] \shl r_w^0))$. Suppose $\beta{+}1 \ge p$ (see~Fig.~\ref{fig:betaacc},b). Let $k = \lceil q / p\rceil$. To recalculate the prediction, it suffices to perform the following:
\begin{equation} \label{eq:series}
\begin{array}{l}
res[\rrange{n..n{+}q}] \orgets d \band (z[\lrange{j_t..j_w}] \bor (z[\lrange{j_w{+}1..j'_t}] \shl r_w^0)),\\
res[\rrange{n..n{+}q}] \orgets d \band ((z[\lrange{j_t..j_w}] \bor (z[\lrange{j_w{+}1..j'_t}] \shl r_w^0)) \shl p),\\[-2pt]
\ldots\\[-2pt]
res[\rrange{n..n{+}q}] \orgets d \band ((z[\lrange{j_t..j_w}] \bor (z[\lrange{j_w{+}1..j'_t}] \shl r_w^0)) \shl (kp))\enspace.
\end{array}
\end{equation}
To perform these assignments in $O(1)$ time, we use a precomputed array $g$ of length $\beta$ such that $g[i] = \sum_{j=0}^{\lfloor \beta/i\rfloor}2^{ij}$ is the bit mask containing ones separated by $i{-}1$ zeroes. Then the sequence of assignments (\ref{eq:series}) is equivalent to the operation $res[\rrange{n..n{+}q}] \orgets d \band ((z[\lrange{j_t..j_w}] \bor (z[\lrange{j_w{+}1..j'_t}] \shl r_w^0)) \cdot g[p])$.

Along with the $f$-prediction, the described method can produce additional predictions. Indeed, suppose we processed a cubic leading suffix-palindrome $w = (uv)^*u$. If $q > |v|$, the position $n{+}(|v|{+}1)/2$ is the center of the suffix-palindrome $v$ after $|v|$ predictable calls. However, the corresponding assignment $res[n{+}|v|]\orgets m[n]$ is performed much earlier: calculating the prediction in the $n$th call of $\append$, we accumulate the bit $m[n]$ in the array $z$ (see (\ref{eq:betaacc})) and then use it in updating $res[n..n{+}q]$. The assignment  $res[n{+}|v|{+}1]\orgets m[n{-}1]$ is  performed at the same moment but corresponds to the $(|v|{+}1)$st predictable call, and so on. If $q>|vuv|$, we have the same situation with the suffix-palindrome $vuv$ after $|vuv|$ calls. All these premature assignments are not necessary but bring no trouble.

\begin{lemma}
Given the array $z$, the prediction recalculation requires $O(l + \min\{2\beta, s\})$ time, where $l$ is the number of leading suffix-palindromes and $s$ is the length of the second largest leading suffix-palindrome. \label{PredictionRecalc}
\end{lemma}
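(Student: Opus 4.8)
The plan is to bound the four steps of the recalculation (Section~5.3) separately and add the bounds. Throughout, a single read or write of a $\beta$-bit group in the backward arrays $m,z$ or the forward array $res$ costs $O(1)$, and so does each query to the palindromic iterator (the minimal period of a suffix-palindrome, the cubicity test, and the jump to the next leading center via Lemma~\ref{NextLeadingPal}).

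First I would dispose of step~\ref{stepLeading}, the outer loops of steps~\ref{stepNonAccelerate} and~\ref{stepAccelerateWithZ}, and all bookkeeping: each of these makes one pass over the at most $l$ leading centers $c_{i_0},\dots,c_{i_h}$ and spends $O(1)$ per center — in step~\ref{stepLeading} a single backward read $m[\lrange{j_{i_r}{-}\mathrm{pr}(c_{i_r})..j_{i_r}}]$ and one disjunction into $res$, and in step~\ref{stepAccelerateWithZ} the constant-size update~(\ref{eq:series}) realised through the mask $d$ and one multiplication by the precomputed $g[p]$. Hence these contribute $O(l)$ (and $l=O(\log n)$ by Lemma~\ref{LogLeadingPal}, though the sharper count $l$ is what we keep).

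Next comes step~\ref{stepNonAccelerate}, the only place where individual non-leading suffix-palindromes are touched. By Lemma~\ref{LongSubpalindromes} the suffix-palindromes strictly between two consecutive leading centers $c_{i_r}$ and $c_{i_{r+1}}$ are exactly the non-leading members $(uv)^{s+1}u,\dots,(uv)^2u$ of the cubic series headed by $c_{i_r}$, and step~\ref{stepNonAccelerate} spends $O(1)$ on each such member, but only when $r\ge 1$, $c_{i_r}$ is cubic and $\len(c_{i_r})\le 2\beta$. All these members are pairwise distinct suffix-palindromes of $text$, hence have pairwise distinct lengths; moreover each is shorter than its head $c_{i_r}$, whose length is at most $2\beta$ and, because $r\ge 1$ excludes the longest leading palindrome $c_{i_0}=c_0$, is at most the second largest leading length $s$. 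Thus every length touched here lies in $\{1,\dots,\min\{2\beta,s\}-1\}$, so there are $O(\min\{2\beta,s\})$ of them, and step~\ref{stepNonAccelerate} costs $O(l+\min\{2\beta,s\})$.

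The hard part is step~\ref{stepRecalcZ}, the accumulation of the blocks of $z$ consumed by step~\ref{stepAccelerateWithZ}. For every long or longest cubic leading palindrome $w=(uv)^*u$ we must realise the invariant~(\ref{eq:betaacc}) on the relevant part of the block $z[j_t..j_t']$; by the choice of $r_w^0,r_w^1$ this part has size $\min\{\beta{+}1,p\}=O(\beta)$, and the periodic fold is produced in $O(1)$ word operations by the backward group reads $z[\lrange{\cdots}]$ together with the multiplier $g[p]$, exactly as in~(\ref{eq:series}). The obstacle is that~(\ref{eq:betaacc}) nominally ORs a ladder of $\Theta(\len(w)/p)$ bits of $m$ per block position, which for the longest series would be far beyond budget if recomputed from scratch. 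The plan is therefore to exploit that $z$ is \emph{given}, i.e. maintained so that~(\ref{eq:betaacc}) already holds up to the previous recalculation, and to fold in only the bits of $m$ revealed since then: at most $\beta$ positions were appended during the intervening predictable calls, and these fresh positions belong to essentially the single extending series (that of $c_0$), so the fresh accumulation is $O(\beta)=O(\min\{2\beta,s\})$, while every other block is either unchanged or headed by a palindrome shorter than $s$. Verifying this last point — that one recalculation restores~(\ref{eq:betaacc}) on the fly within the stated budget, in the same spirit as the on-the-fly maintenance of $z$ in the $O(n\log n)$ engine — is the step I expect to be delicate. Summing the three contributions yields $O(l+\min\{2\beta,s\})$.
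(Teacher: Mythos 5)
Your treatment of steps~\ref{stepLeading}, \ref{stepNonAccelerate} and~\ref{stepAccelerateWithZ} is correct and is essentially the paper's argument (the paper merely asserts the $O(1)$-per-series cost of steps~\ref{stepLeading} and~\ref{stepAccelerateWithZ} and the $O(\min\{2\beta,s\})$ cost of step~\ref{stepNonAccelerate}; your distinct-lengths count for the latter is a reasonable fleshing out). The problem is your last paragraph. The hypothesis ``given the array $z$'' means precisely that step~\ref{stepRecalcZ} is \emph{excluded} from this lemma's accounting: the paper bounds the recalculation of $z$ separately, in Lemma~\ref{Zrecalc}, by $O(l + (n-n_0))$, and only later (in Lemmas~\ref{PredictableTime} and~\ref{UnpredictableTime}) adds the two bounds together. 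So the long analysis you attempt for step~\ref{stepRecalcZ} is out of scope here.

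Worse, the bound you claim for it is not justified and is false as stated: you write that the fresh accumulation into $z$ costs $O(\beta)=O(\min\{2\beta,s\})$, but when $s<\beta$ these are not the same, and the true cost of restoring~(\ref{eq:betaacc}) is $O(l+(n-n_0))$ with $n-n_0$ up to $\beta{+}1$, which genuinely cannot be absorbed into $O(l+\min\{2\beta,s\})$. This is exactly why the paper splits the statement into two lemmas with two different bounds. You also flag the incremental maintenance of~(\ref{eq:betaacc}) as ``delicate'' and leave it unverified; that delicate part (the overlapping-segments argument of Fig.~\ref{fig:fiveperiod} and the geometric summation over periods $p\le\beta{+}1$) is the entire content of Lemma~\ref{Zrecalc} and cannot be waved through. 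If you restrict your proof to steps~\ref{stepLeading}--\ref{stepAccelerateWithZ}, it stands; as written, the step~\ref{stepRecalcZ} portion is both a misreading of the hypothesis and an unsound bound.
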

\begin{proof}
The above analysis shows that each of steps~\ref{stepLeading},~\ref{stepAccelerateWithZ} takes $O(1)$ time per series of palindromes with a common period. Step~\ref{stepNonAccelerate} takes $O(\min\{2\beta, s\})$ time.
\end{proof}

\vspace*{-4mm}
\subsubsection{5.5. Recalculation of the array $z$ and the time bounds} \label{sectRecalcZ}

\begin{lemma}
Recalculation of $z$ requires $O(l + (n - n_0))$ time, where $l$ is the number of leading suffix-palindromes and $n_0$ is the length of $text$ at the moment of the previous recalculation. \label{Zrecalc}
\end{lemma}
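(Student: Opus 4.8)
The recalculation must re-establish the invariant~(\ref{eq:betaacc}) for every current long cubic leading suffix-palindrome, and my plan is to do this incrementally, charging the work to the $l$ leading palindromes and to the characters appended since the previous recalculation. First I would enumerate the leading suffix-palindromes $w_0,\ldots,w_{l-1}$ in decreasing length, using the palindromic iterator (Proposition~\ref{Iterator}) together with Lemma~\ref{NextLeadingPal} to step from one to the next in $O(1)$ time; for each $w_r$ I extract its minimal period $p$ and the block boundaries $j_t,j'_t,j_w$ in $O(1)$ time. This single pass already accounts for the $O(l)$ summand and isolates the long cubic palindromes (those with $\len>2\beta$ or equal to $\maxPal$) whose blocks actually require work.

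The key observation driving the $O(n-n_0)$ summand is that predictable calls never touch $z$, so between the two recalculations the array $z$ has gone stale only with respect to the freshly appended bits $m[n_0{+}1..n]$; moreover a recalculation is forced as soon as the counter $f\le\beta$ reaches $0$ (or $\maxPal$ changes), whence $n-n_0\le\beta+1$. Thus for a period-$p$ block that already existed at the previous recalculation, its entry for each residue class modulo $p$ correctly aggregates $m$ up to $n_0$, and it remains only to fold in the new bits. I would read the contiguous chunk $m[\lrange{n_0{+}1..n}]$ as a single machine word from the backward array $m$ and OR it into the block using the wrap-around indexing $i_j=j'_t-((j+j'_t-j_w)\bmod p)$ of Sect.~5.4: for $p>\beta$ this is one shift, and for $p\le\beta$ it is the multiply-by-$g[p]$ trick, so each surviving block is refreshed in $O(1)$.

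The part I expect to be the main obstacle is the correctness of this incremental fold, and in particular the treatment of a long cubic block that appears only at the current recalculation, for which no aggregate up to $n_0$ can be reused. For such a block I would instead inherit the aggregate from the next-shorter leading palindrome in its equal-period series, exactly as the $O(n\log n)$ implementation restores $z$ from a parent leading subpalindrome in line~\ref{lst:line:recalcZ}; justifying this reuse requires the structural description of Sect.~5.4, namely that the non-leading suffix-palindromes beneath a cubic leading $w=(uv)^{s+2}u$ are exactly $(uv)^{s+1}u,\ldots,(uv)^2u$, forming a single series of common period $p$. Combining the $O(1)$ refresh per block over the $O(l)$ blocks with the single $O(n-n_0)$ scan that assembles the new-bit word and disposes of the short leftover series then yields the claimed $O(l+(n-n_0))$ bound.
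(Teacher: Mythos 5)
There is a genuine gap at exactly the step you flag as the main obstacle, but the real obstacle is a different one. Your premise that a period-$p$ block which already existed at the previous recalculation ``correctly aggregates $m$ up to $n_0$ for each residue class modulo $p$'' is false when $p>\beta{+}1$: invariant~(\ref{eq:betaacc}) is maintained only for the $\beta{+}1$ positions $i_0,\ldots,i_\beta$, i.e.\ for a window of the length-$p$ block that slides cyclically by one position per predictable call. The window needed at the current recalculation and the window refreshed at the previous one overlap or are merely adjacent, so some (possibly all) of the entries you must now read were last written not at time $n_0$ but at recalculations up to about $2p$ appended characters in the past, far more than $n-n_0\le\beta{+}1$. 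For those entries, OR-ing in only $m[\lrange{n_0{+}1..n}]$ omits every term $m[i{+}kp]$ with $i{+}kp\le n_0$ that accrued since they were last touched. This is precisely the crux of the paper's proof for long periods: it splits on $|t|<5p$ (at most five terms per position, so the window is rebuilt from scratch in $O(1)$ word operations) versus $|t|\ge5p$ (the previously refreshed windows, each of length $\beta{+}1$ and pairwise adjacent or overlapping, are shown to cover the current window within the last $2p$ iterations, whence OR-ing in the terms for $k=\lfloor|t|/p\rfloor$ and $k=\lfloor|t|/p\rfloor{-}1$ --- the last \emph{two} periods, not just the last $n{-}n_0$ positions --- restores the invariant). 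Without an argument of this kind your $O(1)$-per-long-block claim is unsupported.

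Two further problems. Your fallback for a block that first appears at the current recalculation --- inheriting the aggregate from the next-shorter leading palindrome of the series --- does not work: by Lemma~\ref{NextLeadingPal} that palindrome is $uvu$, which is not cubic with period $p$ and owns no period-$p$ block, and line~\ref{lst:line:recalcZ} of Sect.~4 manipulates a differently defined, one-bit-per-index array $z$ with a different invariant, so the analogy does not transfer. The paper instead disposes of young blocks through the subcases $|t|<5p$ and $t_0<4p$, where direct computation is cheap because few terms are involved. Finally, the $O(n{-}n_0)$ summand is not obtained from ``a single scan'': for $p\le\beta{+}1$ each block costs $O((n{-}n_0)/p)$ word operations, and the bound follows only because the periods of distinct leading suffix-palindromes are majorized by a geometric sequence; your accounting does not reproduce this summation.
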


\begin{proof}
Given a cubic leading suffix-palindrome $w$ with minimal period $p$, we set some bits in $z$ according to~(\ref{eq:betaacc}). Recall that $t$ is the longest suffix of $text$ with the period $p$, $j_t = n - |t|$, $j'_t = j_t + p - 1$, $j_w = n - |w|$, $r_w^0 = \min\{\beta, j_w-j_t\}+1$, $r_w^1 = \min\{\beta+1 - r_w^0, j'_t - j_w\}$, $r_w^0 + r_w^1=\min\{\beta{+}1, p\}$. Inside the range $\overline{j_t,j_t'}$ we have to fill the blocks $z[j_w{-}r_w^0{+}1..j_w]$ and $z[j'_t{-}r_w^1{+}1..j'_t]$. The main observation is that these blocks need only a little update after the previous recalculation.

1) \emph{Suppose $p > \beta{+}1$} (Fig.~\ref{fig:betaacc},a). If $|t| < 5p$, the assignment (\ref{eq:betaacc}) requires an $O(1)$ time both for a single bit and for a block of length $\le\beta$. Now consider the case $|t| \ge 5p$. For simplicity, the block $z[j_t..j'_t]$ is supposed to be cyclic. Then the blocks $z[j_w{-}r_w^0{+}1..j_w]$ and $z[j'_t{-}r_w^1{+}1..j'_t]$ form one segment of length $\beta$, denoted by $S$. Note that every sequence of $\beta{+}1$ calls to $\append$ contains a call that recalculates $z$. Therefore, the previous recalculation of the array $z$ filled some segment $S_1$ of length $\beta{+}1$ in $z[j_t..j'_t]$, and $S_1$ either is adjacent to $S$ from the right or overlaps $S$. Similarly, the second previous recalculation of $z$ filled some segment $S_2$ which is either adjacent to $S_1$ or overlaps it, and so on. Since $|t| \ge 5p$, all recalculations during the last $2p$ iterations processed cubic suffix-palindromes with the period $p$. In these recalculations, all positions in $S$ were filled (see~Fig.~\ref{fig:fiveperiod}). Thus, it suffices to perform $z[\lrange{j_w{-}r_w^0{+}1..j_w}] \orgets m[\lrange{j_w{-}r_w^0{+}1{+}kp..j_w{+}kp}]$, $z[\lrange{j'_t{-}r_w^1{+}1..j'_t}] \orgets m[\lrange{j'_t{-}r_w^1{+}1{+}kp..j'_t{+}kp}]$ for $k = \lfloor |t|/p \rfloor$ and $k = \lfloor |t|/p \rfloor - 1$, getting a $O(1)$ time bound again. Thus, the total time for all periods is $O(l)$.

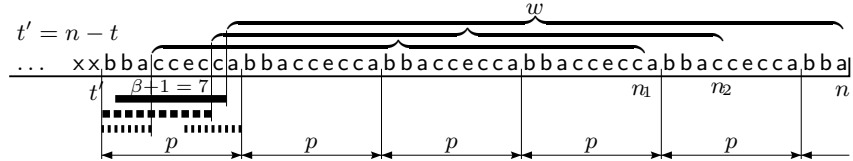
\begin{figure}[htb]
\centering
\begin{picture}(112,18)(0,0)
\gasset{Nframe=n,Nw=5,Nh=3,AHnb=0,linewidth=0.25}
\node(text)(60,12){\sf x\,x\,b\,b\,a\,c\,c\,e\,c\,c\,a\,b\,b\,a\,c\,c\,e\,c\,c\,a\,b\,b\,a\,c\,c\,e\,c\,c\,a\,b\,b\,a\,c\,c\,e\,c\,c\,a\,b\,b\,a\,c\,c\,e\,c\,c\,a\,b\,b\,a}
\node(t)(11.7,8.1){\footnotesize$t'$}
\node(n1)(84.2,8.1){\footnotesize$n_{\!1}$}
\node(n2)(94.8,8.3){\footnotesize$n_{\!2}$}
\node(n)(110.8,8){\footnotesize$n$}
\put(1,11){\makebox(0,0)[lc]{\ldots}}
\drawline(0,10)(111.7,10)(111.7,12.5)
\node(s)(69.9,19){$w$}
\put(51.7,11.3){\makebox(0,0)[cb]{$\overbrace{\phantom{aaaaannmaaaaaaaaaaaaaaaaaaaaaaaaaaaaa}}$}}
\put(60.9,12.9){\makebox(0,0)[cb]{$\overbrace{\phantom{aaaaannmnnnaaaaaaaaaaaaaaaaaaaaaaaaaaa}}$}}
\put(69.9,14.5){\makebox(0,0)[cb]{$\overbrace{\phantom{aaaaannnaaaaaaaaaaaaaaaaaaaaaaaaaaaaaaaaaaaaaaa}}$}}
\gasset{linewidth=0.1}
\drawline(12.3,11)(12.3,-1)
\drawline(30.9,11)(30.9,-1)
\drawline(49.5,11)(49.5,-1)
\drawline(68.1,11)(68.1,-1)
\drawline(86.7,11)(86.7,-1)
\drawline(105.3,11)(105.3,-1)
\drawline(28.9,16.1)(28.9,6)
\drawline(26.9,14.5)(26.9,4)
\drawline(18.9,12.9)(18.9,2)
\put(21.3,7.5){\makebox(0,0)[cb]{\scriptsize$\beta{+}1=7$}}
\put(1,14.5){\makebox(0,0)[lb]{\footnotesize$t'=n-t$}}
\drawline[linewidth=1](14.1,7)(28.9,7)
\drawline[linewidth=1,dash={1 0.5}{0}](12.3,5)(26.9,5)
\drawline[linewidth=1,dash={0.4 0.5}{0}](12.3,3)(18.9,3)
\drawline[linewidth=1,dash={0.4 0.5}{0}](22.9,3)(30.9,3)
\gasset{AHnb=1,ATnb=1,AHlength=1.2}
\drawline(12.3,-0.5)(30.9,-0.5)
\drawline(30.9,-0.5)(49.5,-0.5)
\drawline(49.5,-0.5)(68.1,-0.5)
\drawline(68.1,-0.5)(86.7,-0.5)
\drawline(86.7,-0.5)(105.3,-0.5)
\drawline[AHnb=0](105.3,-0.5)(111.7,-0.5)
\multiput(21.6,-0.2)(18.6,0){5}{\makebox(0,0)[cb]{$p$}}
\end{picture}
\caption{\small A suffix of text at the moment of recalculation of $z$. The points $n_1$ and $n_2$ of some (not necessarily last!) previous recalculations are marked; the correspondent recalculated segments of $z[j_t..j'_t]$ are shown.}
\label{fig:fiveperiod}
\end{figure}

2) \emph{Suppose $p \le \beta{+}1$} (Fig.~\ref{fig:betaacc},b). Then we must fill the whole range $z[j_t..j'_t]$. This case is similar to the above one but takes more than $O(1)$ time. We store the value $n_0$ in a variable inside the engine. Note that $n - n_0 \le \beta{+}1$. Let $t_0 = |t| - (n{-}n_0)$. If $t_0 \ge 4p$, it follows, as above, that $z[j_t..j'_t]$ contains a lot of necessary values and to fix $z[j_t..j'_t]$, we perform $z[j_t..j'_t] \orgets m[j_t{+}kp..j'_t{+}kp]$ for every integer $k$ such that $\lfloor t_0/p\rfloor \le k \le \lfloor |t|/p \rfloor$. If $t_0 < 4p$, we immediately perform $z[j_t..j'_t] \gets m[j_t..j'_t] \bor m[j_t{+}p..j'_t{+}p] \bor \cdots \bor m[j_t{+}kp..j'_t{+}kp]$ for $k = \lfloor |t|/p\rfloor$. Thus, the recalculation requires $O((n - n_0) / p)$ time. Summing up these bounds for all $p$, we get $O(n{-}n_0)$, because the values of $p$ are majorized by a geometric sequence.

Summing up the bounds for the cases 1), 2) finishes the proof.
\end{proof}

\begin{lemma}
After an unpredictable call to $\append$, $k$ successive predictable calls require $O(k)$ time in total. \label{PredictableTime}
\end{lemma}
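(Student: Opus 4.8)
The plan is to classify the $k$ predictable calls by the branch of $\append$ they execute and to amortize the few expensive calls against the many cheap ones. Throughout the run the center $\maxPal$ is fixed, and each call extends the longest suffix-palindrome, so its radius grows by $1$ and the quantity $|v| = n - \len(\maxPal)$ decreases by exactly one per call. Each call first performs $\append_i$, whose cost over the entire computation is $O(n)$ by Proposition~\ref{Iterator}; I disregard it here. After $\append_i$, a predictable call takes step~2 (a \emph{cheap} call) when $f > 0$ and step~3 (a \emph{recalculating} call) when $f = 0$.

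First I would bound the cheap calls. Such a call only decrements $f$ and performs a constant number of reads from the backward array $m$ and writes to the forward array $res$, each touching at most $\beta{+}1$ consecutive bits and hence costing $O(1)$. So all cheap calls together cost $O(k)$.

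Next I would bound a single recalculating call. It recalculates $z$ and then $res[\rrange{n..n{+}f}]$, costing $O(l + (n{-}n_0))$ by Lemma~\ref{Zrecalc} and $O(l + \min\{2\beta, s\})$ by Lemma~\ref{PredictionRecalc}, where $l$ is the number of leading suffix-palindromes. Here $l = O(\log n) = O(\beta)$ by Lemma~\ref{LogLeadingPal} and the assumption $\beta > \log n$; moreover $\min\{2\beta, s\} = O(\beta)$ and $n - n_0 \le \beta{+}1 = O(\beta)$, the latter because $f$ hits $0$ after at most $\beta$ cheap calls and then forces a recalculation, so recalculations occur at least once in every $\beta{+}1$ calls. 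Thus every recalculating call costs $O(\beta)$.

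The crux is counting the recalculating calls in the window. As long as $|v| \ge \beta$, step~3 resets $f$ to exactly $\beta$, so the next recalculating call is preceded by exactly $\beta$ cheap calls --- those that bring $f$ from $\beta$ down to $0$. Once $|v|$ falls below $\beta$, the reset yields $f = |v|$, after which $f$ and $|v|$ decrease in lock-step until $v$ is exhausted and the run ends with an unpredictable call; this tail contains no recalculating call (and the case $|v| = 0$ cannot separate two predictable calls, since a palindromic $text$ forces the next call to be unpredictable). Hence every recalculating call inside the window is shielded by $\beta$ distinct cheap calls of the window, so their number is at most $k/\beta$ and their total cost is $(k/\beta)\cdot O(\beta) = O(k)$. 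Adding the $O(k)$ spent on cheap calls gives the claimed bound. I expect this counting step to be the main obstacle: one must rule out clusters of recalculations once $|v|$ becomes small, which is exactly what the lock-step behaviour of $f$ and $|v|$ in the tail guarantees.
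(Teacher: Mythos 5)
Your proposal is correct and follows essentially the same route as the paper: cheap predictable calls cost $O(1)$ each, each recalculating call costs $O(\beta)$ by Lemmas~\ref{PredictionRecalc} and~\ref{Zrecalc} together with $l=O(\log n)=O(\beta)$ and $n-n_0\le\beta{+}1$, and there are at most $k/\beta$ recalculations since $f$ is reset to $\min\{\beta, n-\len(\maxPal)\}$ and must be driven to $0$ by cheap calls before the next one. Your extra care about the tail where $|v|<\beta$ is a detail the paper leaves implicit but does not change the argument.
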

\begin{proof}
A predictable call without recalculation takes $O(1)$ time. The number of recalculations during these $k$ calls is $\lfloor k/\beta\rfloor$. Since the number of leading suffix-palindromes is $O(\log n)$ by Lemma~\ref{LogLeadingPal}, it follows from Lemmas~\ref{PredictionRecalc},~\ref{Zrecalc} that the recalculation takes $O(\log n + \min\{2\beta, O(n)\}) + O(\log n + O(\beta)) = O(\beta)$ time, whence the result.
\end{proof}

\begin{lemma}
An unpredictable call requires $O(\maxPal{-}\maxPal_0 + n{-}n_0)$ time, where $\maxPal_0$ is the center of the longest suffix-palindrome and $n_0$ is the length of $text$ at the moment of the previous unpredictable call.\label{UnpredictableTime}
\end{lemma}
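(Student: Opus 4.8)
The plan is to split the cost of an unpredictable call into the work inside the palindromic iterator and the work of step~3 of $\append$ (prediction recalculation). The iterator is handled globally by Proposition~\ref{Iterator} ($O(n)$ over the whole run), and at this call its Manacher loop advances the center of the longest suffix-palindrome from $\maxPal_0$ to $\maxPal$, performing $O(\maxPal{-}\maxPal_0)$ iterations, which is within budget. For step~3 I would invoke the two lemmas already proved: by Lemma~\ref{Zrecalc} the recomputation of $z$ takes $O(l + (n{-}n'))$ time, where $l$ is the number of leading suffix-palindromes and $n'$ is the length of $text$ at the previous recalculation, and by Lemma~\ref{PredictionRecalc} the rest takes $O(l + \min\{2\beta, s\})$ time, where $s$ is the length of the second largest leading suffix-palindrome. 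Since every unpredictable call recalculates $z$, the previous recalculation occurs no earlier than the previous unpredictable call, hence $n' \ge n_0$ and $n{-}n' \le n{-}n_0$. Thus the call costs $O(l + \min\{2\beta, s\} + (n{-}n_0))$, and it remains to prove that $l + s = O(\maxPal{-}\maxPal_0)$.

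Put $\delta = \maxPal{-}\maxPal_0$; this is a positive half-integer because an unpredictable call strictly increases $\maxPal$. I would first establish the key estimate $s = O(\delta)$. Let $w_0 = text[s_2..n]$ be the new longest suffix-palindrome, centered at $\maxPal$, and let $P = text[s_1..n{-}1]$ be the longest suffix-palindrome of the text before this call; as all intervening calls were predictable, $P$ is centered at $\maxPal_0$, so $s_1 = 2\maxPal_0{-}n{+}1$ and $s_2 = s_1{+}2\delta{-}1$. Composing the reflection about the center of $P$ with the reflection about the center of $w_0$ (whose centers are $\delta$ apart) gives $text[x] = text[x{+}2\delta]$ for every $x \in [s_1, n{-}2\delta]$, i.e.\ $text[s_1..n]$ has period $q = 2\delta$. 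In particular $w_0 \subseteq text[s_1..n]$ has period $q$.

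Now I would split on $|w_0|$. If $|w_0| < 6\delta$, then $s < |w_0| = O(\delta)$ at once. Otherwise $|w_0| \ge 3q$, so the minimal period $p_0 \le q = 2\delta$ satisfies $p_0 \le |w_0|/3$ and $w_0$ is cubic; writing its canonical decomposition $(uv)^*u$ with $|uv| = p_0$ and noting $w_0 \neq uvu$, Lemma~\ref{NextLeadingPal} identifies the next leading suffix-palindrome as $uvu$, whose length $|u|{+}p_0 < 2p_0 \le 4\delta$ equals $s$. Either way $s = O(\delta)$. Finally, the leading suffix-palindromes strictly shorter than $w_0$ decrease in length by a factor exceeding $3/2$ (Lemma~\ref{LogLeadingPal}), so there are $O(\log s)$ of them; hence $l = 1 + O(\log s) = O(\delta)$. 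Combining everything, the unpredictable call runs in $O(\delta + (n{-}n_0)) = O(\maxPal{-}\maxPal_0 + n{-}n_0)$ time.

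I expect the periodicity step to be the main obstacle: one must turn the two nearby palindrome centers into the \emph{exact} period $q = 2\delta$ of the long region $text[s_1..n]$, and then certify that a long $w_0$ is cubic with minimal period at most $2\delta$, so that Lemma~\ref{NextLeadingPal} pins the second largest leading suffix-palindrome to length $O(\delta)$. Care is also needed in the boundary cases (short $w_0$, an empty trailing leading palindrome, and very small $\delta$), where the logarithmic count $O(\log s)$ must still be absorbed into $O(\delta)$; this is legitimate because $\delta \ge \tfrac12$, but it should be stated explicitly.
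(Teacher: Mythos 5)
Your proposal is correct and follows essentially the same route as the paper: reduce the cost to the two recalculation lemmas (Lemmas~\ref{PredictionRecalc} and~\ref{Zrecalc}), then show that $2(\maxPal{-}\maxPal_0)$ is a period of the region around the new longest suffix-palindrome and split on whether that palindrome is long enough to be cubic, concluding $s=O(\maxPal{-}\maxPal_0)$. The only differences are cosmetic — you derive the period by composing the two reflections where the paper applies Lemma~\ref{PalSuffix} to the inner palindrome, and you pin down $s$ via Lemma~\ref{NextLeadingPal} where the paper uses the definition of a leading subpalindrome directly.
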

\begin{proof}
 Assume that $text=wr$ just before the current unpredictable call to $\append$, where $r$ is the longest suffix-palindrome. Note that $r$ has the center $\maxPal_0$. After this call one has $text = wrc = w't$, where $t$ is the longest suffix-palindrome. If $|t|=1$, the call takes $O(1)$ time. Suppose $t = cuc$ for some palindrome $u$. By Lemma~\ref{PalSuffix}, the number $p = |r| - |u|$ is a period of $r$. By Lemmas~\ref{PredictionRecalc} and~\ref{Zrecalc}, the prediction recalculation takes $O(l + \min\{2\beta, s\}) + O(l + (n - n_0))$ time, where $l$ is the number of leading suffix-palindromes and $s$ is the length of the second longest leading suffix-palindrome. Since $l \le s$ and $p = 2(\maxPal{-}\maxPal_0)$, it suffices to prove that $s = O(p)$. If $|u|\le \frac{2}3|r|$, then $s < |u|\le 2p$. On the other hand, if $|u|>\frac{2}3|r|$ then $p$ is a period of $t$ by~Lemma~\ref{LongSubpalindromes}. Hence $s<2p$ by the definition of a leading palindrome.
\end{proof}


\begin{proposition}
The palindromic engine can be implemented to work in $O(n)$ time and space.
\label{LinearPalindromicEngine}
\end{proposition}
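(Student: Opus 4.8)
The plan is to bound separately the three kinds of work performed over a sequence of $n$ calls to $\append$: the work inside the palindromic iterator, the work done on predictable calls, and the work done on unpredictable calls (i.e., step~3). The space bound is immediate: the engine stores the iterator (which is $O(n)$ by Proposition~\ref{Iterator}), the backward arrays $m$ and $z$, the forward array $res$ of length $n{+}\beta$, and the precomputed array $g$ of length $\beta$; since $\beta = O(\log n) = O(n)$, all of these occupy $O(n)$ space. For the time bound, the iterator contributes $O(n)$ in total, again by Proposition~\ref{Iterator}, so it remains to account for the engine's own work in the two regimes.

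First I would handle the predictable calls. Since the first call, made on an empty $text$, creates the first suffix-palindrome and hence changes $\maxPal$, it is unpredictable; consequently every predictable call belongs to a maximal run of predictable calls immediately following some unpredictable call. Lemma~\ref{PredictableTime} bounds such a run of $k$ predictable calls by $O(k)$ time. Summing over all runs, and noting that the runs partition the set of predictable calls, the total cost of all predictable calls is $O(n)$.

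The unpredictable calls require an amortized, telescoping argument. By Lemma~\ref{UnpredictableTime}, the cost of an unpredictable call is $O(\maxPal - \maxPal_0 + n - n_0)$, where $\maxPal_0$ and $n_0$ are the values of $\maxPal$ and $|text|$ at the previous unpredictable call. I would enumerate the unpredictable calls in order, letting $M_j$ and $N_j$ denote the values of $\maxPal$ and $n$ right after the $j$th such call. The term $n - n_0$ at the $j$th unpredictable call equals $N_j - N_{j-1}$, and these telescope to at most $N_T \le n$ over all $T$ unpredictable calls. For the term $\maxPal - \maxPal_0$, the key observation is that $\maxPal$ never decreases: in the loop of $\append_i$ the center $s$ starts at its old value and is only incremented, so $\maxPal$ is monotone non-decreasing over all calls, and in particular it stays fixed during the predictable calls that separate consecutive unpredictable ones. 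Hence $\maxPal_0$ at the $j$th unpredictable call equals $M_{j-1}$, the term is $M_j - M_{j-1} \ge 0$, and the sum telescopes to $M_T - M_0 \le n{+}1$. Therefore the total cost of all unpredictable calls is $O(n)$.

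I expect the main obstacle to be exactly this last step: converting the per-call estimate of Lemma~\ref{UnpredictableTime} into an $O(n)$ aggregate, since neither $\maxPal - \maxPal_0$ nor $n - n_0$ is individually small. The decisive ingredient is the monotonicity of $\maxPal$ inherited from the Manacher-style loop of $\append_i$, which makes both $\sum(\maxPal - \maxPal_0)$ and $\sum(n - n_0)$ telescope to $O(n)$. Combining the $O(n)$ contributions of the iterator, the predictable calls, and the unpredictable calls yields the claimed $O(n)$ time bound and, together with the space estimate above, completes the proof.
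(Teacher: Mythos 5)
Your proposal is correct and follows essentially the same route as the paper: the paper also sums the per-call bound of Lemma~\ref{UnpredictableTime} over the sequence of unpredictable calls, observes that both the $\maxPal$-differences and the $n$-differences telescope to $O(n)$, and then invokes Lemma~\ref{PredictableTime} for the runs of predictable calls. Your additional remarks on the monotonicity of $\maxPal$ and on the space bound only make explicit what the paper leaves implicit.
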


\begin{proof}
The correctness of the implementation described in Sect.~5.2,~5.3 was proved in Sect.~5.2--5.4. It remains to prove the time bound. Consider the sequence of $n$ calls to $\append$. Let $n_1< n_2< \ldots < n_k$ be the numbers of all unpredictable calls to $\append$ and $\maxPal_1  < \maxPal_2 < \ldots < \maxPal_k$ be the centers of the longest suffix-palindromes just before each of these calls. By Lemma~\ref{UnpredictableTime}, all these calls require $O(1 + (\maxPal_2{-}\maxPal_1) + (n_2{-}n_1) + (\maxPal_3{-}\maxPal_2) + (n_3{-}n_2) + \ldots + (\maxPal_k{-}\maxPal_{k{-}1}) + (n_k{-}n_{k{-}1})) = O(n)$ time. A reference to Lemma~\ref{PredictableTime} ends the proof.
\end{proof}

Proposition~\ref{LinearPalindromicEngine} together with Proposition~\ref{PalindromeEngineReduction} implies the main theorem.

\section{Conclusion}

In the RAM model considered in this paper all operations are supposed to be constant-time. This is the so called \emph{unit-cost RAM}. Our algorithm heavily relies on multiplication and modulo operations, and we do not know whether it can be modified to use only addition, subtraction, and bitwise operations.

It was conjectured that there exists a context-free language that can not be recognized in linear time by a unit-cost RAM machine. This paper shows that a popular idea to use palindromes in the construction of such a language is quite likely to fail. For some discussion on this problem, see~\cite{Lee}.

\bibliographystyle{splncs}

\end{document}